\theoremstyle{plain}
\newtheorem{theorem}{Theorem}[section]
\newtheorem{proposition}[theorem]{Proposition}
\theoremstyle{definition}
\theoremstyle{remark}
\newtheorem{remark}{Remark}
\begin{document}

\title{On frequency- and time-limited $\mathcal{H}_2$-optimal model order reduction}

\author{
\name{Umair~Zulfiqar\textsuperscript{a}\thanks{CONTACT Umair~Zulfiqar. Email: umair.zulfiqar@research.uwa.edu.au}, Victor Sreeram\textsuperscript{a}, and Xin~Du\textsuperscript{b,c,d}}
\affil{\textsuperscript{a}School of Electrical, Electronics and Computer Engineering, The University of Western Australia (UWA), Perth, Australia; \textsuperscript{b}School of Mechatronic Engineering and Automation, and Shanghai Key Laboratory of Power Station Automation Technology, Shanghai University, Shanghai, China; \textsuperscript{c}Key Laboratory of Knowledge Automation for Industrial Processes, Ministry of Education, Beijing, China; \textsuperscript{d}Key Laboratory of Modern Power System Simulation and Control \& Renewable Energy Technology, Ministry of Education (Northeast Electric Power University), Jilin, China}
}

\maketitle

\begin{abstract}
In this paper, the problems of frequency-limited and time-limited $\mathcal{H}_2$-optimal model order reduction of linear time-invariant systems are considered within the oblique projection framework. It is shown that it is inherently not possible to satisfy all the necessary conditions for the local minimizer in the oblique projection framework. The conditions for exact satisfaction of the optimality conditions are also discussed. Further, the equivalence between the tangential interpolation conditions and the gramians-based necessary condition for the local optimum is established. Based on this equivalence, iterative algorithms that nearly satisfy these interpolation-based necessary conditions are proposed. The deviation in satisfaction of the optimality conditions decay as the order of the reduced-model is increased in the proposed algorithms. Moreover, stationary point iteration algorithms that satisfy two out of three necessary conditions for the local minimizer are also proposed. There also, the deviation in satisfaction of the third optimality conditions decay as the order of the reduced-model is increased in the proposed algorithms. The efficacy of the proposed algorithms is validated by considering one illustrative and three high-order models that are considered a benchmark for testing model order reduction algorithms.
\end{abstract}

\begin{keywords}
$\mathcal{H}_2$-optimal; frequency-limited; model order reduction; near-optimal; oblique projection; pseudo-optimal; reduced-order modeling; suboptimal; time-limited
\end{keywords}

\section{Introduction} The behaviour of the dynamic systems is expressed as mathematical models comprising of several differential equations. The complexity of modern-day dynamic systems has been increasing at a rapid pace due to scientific innovation and sophistication, resulting in large-scale models, which comprise thousands of differential equations. These models are difficult to simulate and analyze due to the excessive computational demand posed by them. The design procedures, which take these models as input, often end up giving complex solutions that are practically not feasible for implementation. To overcome this challenge, model order reduction (MOR) procedures are used to obtain a reduced-order approximation of the original model. The reduced-order model (ROM) can serve as a surrogate for the original model as its behaviour is close to that of the original model, but it is cheaper to simulate and analyze. The design procedures are also simplified with an admissible numerical accuracy by using the ROMs as surrogates for the original high-order model \citep{benner2005dimension,schilders2008model}. The MOR procedure should preserve important characteristics and properties of the original model. The specific characteristics to be preserved in the ROM lead to various families of MOR algorithms \citep{benner2018model,quarteroni2014reduced}.

The performance of a MOR procedure is judged by computing various system norms of the error transfer function. The $\mathcal{H}_\infty$ norm and the $\mathcal{H}_2$ norm are most widely used as they depict the worst-case scenarios \citep{zhou1996robust}. The $\mathcal{H}_\infty$ norm is a more important norm from a system theory perspective as several performance measures in control and communication systems are expressed in terms of the $\mathcal{H}_\infty$ norm \citep{wolf2014h}. However, the efficient computation of the $\mathcal{H}_\infty$ is a challenge in a large-scale setting. Further, the MOR algorithms that tend to ensure less $\mathcal{H}_\infty$ are generally computationally expensive. In particular, an optimal ROM in the $\mathcal{H}_\infty$ is hard to find, and the algorithms available are computationally demanding \citep{castagnotto2017interpolatory}. The MOR procedure should not be a computational challenge in itself; instead, it should reduce the computational cost of the overall experiment. The $\mathcal{H}_2$ norm, on the other hand, can be computed efficiently due to its relation with the system gramians \citep{zhou1996robust}. There are several low-rank methods to efficiently computed these gramians in a large-scale setting \citep{gugercin2003modified,li2002low,penzl1999cyclic}, which makes the $\mathcal{H}_2$ norm a popular choice for accessing the quality of the ROM. Moreover, there exist several algorithms that compute an optimal ROM in the $\mathcal{H}_2$ norm cheaply in a large-scale setting \citep{gugercin2008h_2,van2008h2,xu2011optimal}. Therefore, we also have used the $\mathcal{H}_2$ norm of the error transfer function as a performance measure for the MOR in this paper.

The $\mathcal{H}_2$-optimal MOR problem is to find a local optimum for the (squared) $\mathcal{H}_2$ norm of the error transfer function. The interpolation-based algorithm that computes a local optimum for single-input single-output (SISO) systems was first proposed in \citep{gugercin2008h_2}, which was later generalized for multi-input multi-output (MIMO) systems in \citep{van2008h2}. A more general algorithm based on Sylvester equations was presented in \citep{xu2011optimal}, which was further improved in \citep{bennersparse}. These algorithms are iterative algorithms with no guarantee of convergence. Generally, these algorithms quickly converge for SISO systems, but the convergence slows down as the number of inputs and outputs increases in MIMO systems. Some trust region-based methods to speed up convergence in these methods are also reported in the literature like the one in \citep{beattie2009trust}.

Some suboptimal methods for MOR in the $\mathcal{H}_2$ norm are also reported in the literature that satisfy a subset of the optimality conditions while preserving some additional properties like stability, cf. \citep{gugercin2008iterative,ibrir2018projection,wolf2013h}. Unlike the algorithm in \citep{gugercin2008iterative}, the pseudo-optimal rational Krylov (PORK) algorithm \citep{wolf2013h} is not an iterative algorithm, and it generates the ROM in a single run. Moreover, unlike the algorithm in \citep{ibrir2018projection}, PORK does not require the solution of any linear matrix inequality (LMI) to satisfy a subset of the optimality conditions. The stability of the ROM in PORK is not only guaranteed, but it has pole-placement property, i.e., the user can select the location of poles on the ROM.

The frequency- and time-limited MOR problems were first introduced in \citep{gawronski1990model}. In these problems, it is aimed to achieve superior accuracy in the frequency and time intervals specified by the user. Since no practical system or simulation is run over infinite frequency and time ranges, the frequency- and time-limited MOR problems are important problems. Thus several algorithms in this category are reported in the literature like \citep{li2014frequency,du2010h,kurschner2018balanced,tahavori2013model}.

Recently, the $\mathcal{H}_2$-optimal MOR in limited frequency and time intervals has received a lot of attention. For the frequency-limited $\mathcal{H}_2$-optimal MOR problem, the gramian-based optimality conditions are derived in \citep{petersson2013nonlinear}, and the interpolation-based optimality conditions are derived in \citep{vuillemin2014frequency}. Similarly, for the time-limited $\mathcal{H}_2$-optimal MOR problem, the gramians-based optimality conditions  are derived in \citep{goyal2019time}, and the interpolation-based optimality conditions are derived in \citep{sinani2019h2}. The first group of algorithms, in the category of $\mathcal{H}_2$-optimal MOR in limited frequency and time intervals, are the nonlinear optimization-based algorithms reported in \citep{petersson2014model,vuillemin2014frequency,sinani2019h2}. These algorithms are applicable to the order of few hundreds, after which, they become computationally infeasible. The second group comprises heuristic generalizations of the standard $\mathcal{H}_2$-optimal MOR algorithms reported in \citep{vuillemin2013h2} and \citep{goyal2019time}. These algorithms are computationally efficient, but they do not satisfy any optimality conditions. The third group comprises suboptimal MOR algorithms reported in \citep{zulfiqar2019adaptive,zulfiqar2020frequency,zulfiqar2020time} that satisfy a subset of the optimality conditions. Unlike the first group, the second and third groups are the oblique projection-based algorithms.

In this paper, we further advance the oblique projection-based framework for the $\mathcal{H}_2$-optimal MOR in limited frequency and time intervals. We show that it is inherently not possible to satisfy the complete set of optimality conditions within the oblique projection framework. Further, we show that the heuristic generalizations of standard $\mathcal{H}_2$-optimal MOR algorithms reported in \citep{vuillemin2013h2} and \citep{goyal2019time} nearly satisfy the optimality conditions. The deviations in satisfaction of the optimality conditions decay with an increase in the order of the ROM. The equivalence between the gramians-based conditions in \citep{goyal2019time} and the interpolation-based conditions in \citep{sinani2019h2} is also established. Further, iterative tangential interpolation algorithms for both the frequency-limited and time-limited MOR problems are proposed that nearly satisfy the optimality conditions upon convergence. Stationary point algorithms for both the frequency-limited and time-limited MOR problems are proposed that achieve a subset of the optimality conditions upon convergence. The efficacy of the proposed algorithms is demonstrated by using illustrative and benchmark numerical examples. The numerical simulation confirms the significance of the algorithms and theoretical results presented in the paper.
\section{Preliminaries}
This section covers the preliminary details of the $\mathcal{H}_2$-optimal MOR in limited frequency and time intervals. The mathematical notations used throughout the text are given in Table \ref{tab0}.
\begin{table}[!h]
\centering
\caption{Mathematical Notations}\label{tab0}
\begin{tabular}{|c|c|}
\hline
Notation & Meaning \\ \hline
$\begin{bmatrix}\cdot\end{bmatrix}^*$  & Hermitian\\
$tr(\cdot)$  & Trace\\
$\mathscr{L}[\cdot]$ & Fr{\'e}chet derivative of the matrix logarithm.\\
$Ran(\cdot)$ & Range\\
$orth(\cdot)$& Orthogonal basis\\
$\underset {i=1,\cdots,r}{span}\{\cdot\}$ & Span of the set of $r$ vectors\\
\hline
\end{tabular}
\end{table}
\subsection{Problem setting}
Let $G(s)$ be an $n^{th}$-order $p\times m$ transfer function of a stable linear time-invariant system, which is related to its state-space realization $(A,B,C)$ as
\begin{align}
G(s)=C(sI-A)^{-1}B\nonumber
\end{align}
where $A\in\mathbb{R}^{n\times n}$, $B\in\mathbb{R}^{n\times m}$, and $C\in\mathbb{R}^{p\times n}$. The state-space equations of this realization are given as
\begin{align}
\dot{x}(t)&=Ax(t)+Bu(t),&y(t)&=Cx(t).\nonumber
\end{align}
The MOR problem is to construct an $r^{th}$-order $p\times m$ transfer function $\hat{G}(s)$ that closely approximates $G(s)$ where $r\ll n$. Let $\hat{G}(s)$ be related to its state-space realization $(\hat{A},\hat{B},\hat{C})$ as
\begin{align}
\hat{G}(s)=\hat{C}(sI-\hat{A})^{-1}\hat{B}\nonumber
\end{align}
where $\hat{A}\in\mathbb{R}^{r\times r}$, $\hat{B}\in\mathbb{R}^{r\times m}$, and $\hat{C}\in\mathbb{R}^{p\times r}$. The state-space equations of this realization are given as
\begin{align}
\dot{x_r}(t)&=\hat{A}x_r(t)+\hat{B}u(t),&y_r(t)&=\hat{C}x_r(t).\nonumber
\end{align}

Let $\hat{V}\in\mathbb{R}^{n\times r}$ and $\hat{W}\in\mathbb{R}^{n\times r}$ be the input and output reduction matrices, respectively, which project $G(s)$ onto an $r$-dimensional subspace where $\Pi=\hat{V}\hat{W}^T$ is the oblique projection, $\hat{W}^T\hat{V}=I$, and the columns of $\hat{V}$ span the reduced subspace along the kernel of $\hat{W}^T$. The $r^{th}$-order ROM obtained via projection is given as
\begin{align}
\hat{A}&=\hat{W}^TA\hat{V},& \hat{B}&=\hat{W}^TB, &\hat{C}&=C\hat{V}.\label{1}
\end{align}
The error transfer function $E(s)=G(s)-\hat{G}(s)$ has the following equivalence with its state-space realization $(A_e,B_e,C_e)$
\begin{align}
E(s)=C_e(sI-A_e)^{-1}B_e\nonumber
\end{align} where
\begin{align}
A_e&=\begin{bmatrix}A&0\\0&\hat{A}\end{bmatrix}, & B_e&=\begin{bmatrix}B\\\hat{B}\end{bmatrix},& C_e&=\begin{bmatrix}C&-\hat{C}\end{bmatrix}.\nonumber
\end{align}

Let $P_{e,\omega}$ and $Q_{e,\omega}$ be the frequency-limited controllability and the frequency-limited observability gramians, respectively, of the realization $(A_e,B_e,C_e)$ within the frequency interval $[-\omega,\omega]$ rad/sec \citep{gawronski1990model}, which solve the following Lyapunov equations
\begin{align}
A_eP_{e,\omega}+P_{e,\omega}A_e^T+B_{e,\omega} B_e^T+B_eB_{e,\omega}^T&=0,\nonumber\\
A_e^TQ_{e,\omega}+Q_{e,\omega}A_e+C_{e,\omega}^T C_e+C_e^TC_{e,\omega}&=0\nonumber
\end{align} where
\begin{align}
B_{e,\omega}&=F_\omega[A_e] B_e,\hspace*{1.5cm}C_{e,\omega}=C_eF_\omega[A_e],\nonumber\\
F_\omega[A_e]&=\frac{j}{2\pi}log\big((j\omega I+A_e)(-j\omega I+A_e)^{-1}\big).\nonumber
\end{align}The energy of the impulse response of $E(s)$ within the frequency interval $[-\omega,\omega]$ rad/sec is quantified by the frequency-limited $\mathcal{H}_2$-norm \citep{petersson2014model}, which is related to $P_{e,\omega}$ and $Q_{e,\omega}$ as
\begin{align}
||E(s)||_{\mathcal{H}_{2,\omega}}&=\sqrt{tr(C_eP_{e,\omega}C_e^T)}=\sqrt{tr(CP_\omega C^T-2C\bar{P}_\omega\hat{C}^T+\hat{C}\hat{P}_\omega\hat{C}^T)}\nonumber\\
&=\sqrt{tr(B_e^TQ_{e,\omega}B_e)}=\sqrt{tr(B^TQ_\omega B-2B^T\bar{Q}_\omega\hat{B}+\hat{B}^T\hat{Q}_\omega\hat{B})}.\nonumber
\end{align}
$P_\omega$, $Q_\omega$, $\bar{P}_\omega$, $\bar{Q}_\omega$, $\hat{P}_\omega$, and $\hat{Q}_\omega$ solve the following linear matrix equations
\begin{align}
AP_\omega+P_\omega A^T+BB_\omega^T+B_\omega B^T&=0,\label{u1}\\
A^TQ_\omega+Q_\omega A+C^T C_\omega+C_\omega^T C&=0,\label{u2}\\
A\bar{P}_\omega+\bar{P}_\omega\hat{A}^T+B\hat{B}_\omega^T+B_\omega \hat{B}^T&=0,\label{d1}\\
A^T\bar{Q}_\omega+\bar{Q}_\omega\hat{A}+C^T\hat{C}_\omega+C_\omega^T \hat{C}&=0,\label{d2}\\
\hat{A}\hat{P}_\omega+\hat{P}_\omega\hat{A}^T+\hat{B}\hat{B}_\omega^T+\hat{B}_\omega \hat{B}^T&=0,\label{d3}\\
\hat{A}^T\hat{Q}_\omega+\hat{Q}_\omega\hat{A}+\hat{C}^T\hat{C}_\omega+\hat{C}_\omega^T \hat{C}&=0\label{d4}
\end{align} where
\begin{align}
B_\omega&=F_\omega[A]B,&\hat{B}_\omega&=F_\omega[\hat{A}] \hat{B},&C_\omega&=CF_\omega[A],&\hat{C}_\omega&=\hat{C}F_\omega[\hat{A}].\nonumber
\end{align}

In the frequency-limited MOR problem, the frequency response of $E(s)$ is sought to be small within the desired frequency interval $\Omega=[-\omega,\omega]$ rad/sec, which is generally quantified by the $\mathcal{H}_{2,\omega}$-norm. Thus the $\mathcal{H}_{2,\omega}$-MOR problem under consideration is to construct $\hat{G}(s)$ such that $||E(s)||_{\mathcal{H}_{2,\omega}}$ is small, i.e.,
\begin{align}
\underset{\substack{\hat{G}(s)\\\textnormal{order}=r}}{\text{min}}||E(s)||_{\mathcal{H}_{2,\omega}}.\nonumber
\end{align} This, in turn, ensures that $||Y(\nu)-Y_r(\nu)||$ is small within the desired frequency interval $\Omega=[-\omega,\omega]$ rad/sec where $Y(\nu)$ and $Y_r(\nu)$ are the Fourier transforms of $y(t)$ and $y_r(t)$, respectively.

Let $P_{e,\tau}$ and $Q_{e,\tau}$ be the time-limited controllability and the time-limited observability gramians, respectively, of the realization $(A_e,B_e,C_e)$ within the time interval $\mathcal{T}=[0,\tau]$ sec \citep{gawronski1990model}, which solve the following Lyapunov equations
 \begin{align}
 A_eP_{e,\tau}+P_{e,\tau}A_e^T+ B_e B_e^T -B_{e,\tau} B_{e,\tau}^T&=0,\nonumber\\
 A_e^TQ_{e,\tau}+Q_{e,\tau}A_e+C_e^TC_e-C_{e,\tau}^TC_{e,\tau}&=0.\nonumber
 \end{align}
 where
 \begin{align}
 B_{e,\tau}&=e^{A_e\tau}B_e&&\textnormal{and} &C_{e,\tau}&=C_ee^{A_e\tau}.\nonumber
 \end{align}
 The energy of the impulse response of $E(s)$ within the time interval $\mathcal{T}=[0,\tau]$ sec is quantified by the time-limited $\mathcal{H}_2$-norm \citep{goyal2019time}, which is related to $P_{e,\tau}$ and $Q_{e,\tau}$ as
\begin{align}
||E(s)||_{\mathcal{H}_{2,\tau}}&=\sqrt{tr(C_eP_{e,\tau}C_e^T)}=\sqrt{tr(CP_\tau C^T-2C\bar{P}_\tau\hat{C}^T+\hat{C}\hat{P}_\tau\hat{C}^T)}\nonumber\\
&=\sqrt{tr(B_e^TQ_{e,\tau}B_e)}=\sqrt{tr(B^TQ_\tau B-2B^T\bar{Q}_\tau\hat{B}+\hat{B}^T\hat{Q}_\tau\hat{B})}.\nonumber
\end{align}
$P_\tau$, $Q_\tau$, $\bar{P}_\tau$, $\bar{Q}_\tau$, $\hat{P}_\tau$, and $\hat{Q}_\tau$ solve the following linear matrix equations
\begin{align}
AP_\tau+P_\tau A^T+BB^T-B_\tau B_{\tau}^T&=0,\label{v1}\\
A^TQ_\tau+Q_\tau A+C^TC-C_{\tau}^T C_{\tau}&=0,\label{v2}\\
A\bar{P}_\tau+\bar{P}_\tau\hat{A}^T+B\hat{B}^T-B_\tau \hat{B}_{\tau}^T&=0,\label{e1}\\
A^T\bar{Q}_\tau+\bar{Q}_\tau\hat{A}+C^T\hat{C}-C_{\tau}^T \hat{C}_{\tau}&=0\label{e2}\\
\hat{A}\hat{P}_\tau+\hat{P}_\tau\hat{A}^T+\hat{B}\hat{B}^T-\hat{B}_\tau \hat{B}_{\tau}^T&=0,\label{e3}\\
\hat{A}^T\hat{Q}_\tau+\hat{Q}_\tau\hat{A}+\hat{C}^T\hat{C}-\hat{C}_{\tau}^T \hat{C}_{\tau}&=0\label{e4}
\end{align} where
\begin{align}
B_\tau&=e^{A\tau}B,&\hat{B}_\tau&=e^{\hat{A}\tau}\hat{B},&C_\tau&=Ce^{A\tau},&\hat{C}_\tau&=\hat{C}e^{\hat{A}\tau}.\nonumber
\end{align}

In the time-limited MOR problem, the time response of $E(s)$ is sought to be small within the desired time interval $\mathcal{T}=[0,\tau]$ sec. This is generally quantified by the $\mathcal{H}_{2,\tau}$-norm. Thus the $\mathcal{H}_{2,\tau}$-MOR problem under consideration is to construct $\hat{G}(s)$ such that $||E(s)||_{\mathcal{H}_{2,\tau}}$ is small, i.e.,
\begin{align}
\underset{\substack{\hat{G}(s)\\\textnormal{order}=r}}{\text{min}}||E(s)||_{\mathcal{H}_{2,\tau}}.\nonumber
\end{align}
This, in turn, ensures that $||y(t)-y_r(t)||$ is small within the desired time interval $\mathcal{T}=[0,\tau]$ sec.
\subsection{Necessary conditions for the local minimizer}
Let $\hat{Q}$ be the observability gramian of the pair $(\hat{A},\hat{C})$, and $\bar{Q}$ solve the following Sylvester equation
 \begin{align}
 A^T\bar{Q}+\bar{Q}\hat{A}+C^T\hat{C}&=0.\nonumber
 \end{align}
Then the local minimizer $(\hat{A},\hat{B},\hat{C})$ for $||E(s)||_{\mathcal{H}_{2,\omega}}^2$ satisfies the following gramian-based conditions \citep{petersson2014model}
\begin{align}
 \bar{Q}^T\bar{P}_\omega-\hat{Q}\hat{P}_\omega+Z_\omega&=0,\label{a1}\\
 \bar{Q}_{\omega}^TB-\hat{Q}_\omega\hat{B}&=0,\label{a2}\\
 C\bar{P}_\omega-\hat{C}\hat{P}_\omega&=0\label{a3}
 \end{align} where
 \begin{align}
 Z_\omega=Re\big(\frac{j}{\pi}\mathscr{L}(-\hat{A}-j\omega I,\hat{C}^T\hat{C}\hat{P}_{\omega}-\hat{C}^TC\bar{P}_{\omega})\big).\nonumber
 \end{align}
Let $G(s)$ and $\hat{G}(s)$ have simple poles and the following pole-residue forms
\begin{align}
G(s)&=\sum_{i=1}^{n}\frac{l_ir_i^T}{s-\lambda_i},&&&\hat{G}(s)&=\sum_{i=1}^{r}\frac{\hat{l}_i\hat{r}_i^T}{s-\hat{\lambda}_i}.\nonumber
\end{align}
Now define $G_\omega(s)$, $T_\omega(s)$, $\hat{G}_\omega(s)$ and $\hat{T}_\omega(s)$ as
\begin{align}
G_\omega(s)&=\sum_{i=1}^{n}\frac{l_ir_i^T}{s-\lambda_i}F_\omega[\lambda_i],&&&T_\omega(s)&=G_\omega(s)+G(s)F_\omega[-s],\nonumber\\
\hat{G}_\omega(s)&=\sum_{i=1}^{r}\frac{\hat{l}_i\hat{r}_i^T}{s-\hat{\lambda}_i}F_\omega[\hat{\lambda}_i],&&&\hat{T}_\omega(s)&=\hat{G}_\omega(s)+\hat{G}(s)F_\omega[-s].\nonumber
\end{align}
Then the local minimizer $\hat{G}(s)$ for $||E(s)||_{\mathcal{H}_{2,\omega}}^2$ satisfies the following bi-tangential Hermite interpolation conditions \citep{vuillemin2014frequency}
\begin{align}
\hat{l}_i^TT^{\prime}_\omega(-\hat{\lambda}_i)\hat{r}_i&=\hat{l}_i^T\hat{T}^{\prime}_\omega(-\hat{\lambda}_i)\hat{r}_i,\label{f1}\\
\hat{l}_i^TT_\omega(-\hat{\lambda}_i)&=\hat{l}_i^T\hat{T}_\omega(-\hat{\lambda}_i),\label{f2}\\
T_\omega(-\hat{\lambda}_i)\hat{r}_i&=\hat{T}_\omega(-\hat{\lambda}_i)\hat{r}_i\label{f3}.
\end{align}

Similarly, the local minimizer $(\hat{A},\hat{B},\hat{C})$ for $||E(s)||_{\mathcal{H}_{2,\tau}}^2$ satisfies the following gramians-based conditions \citep{goyal2019time}
 \begin{align}
 \bar{Q}^T\bar{P}_\tau-\hat{Q}\hat{P}_\tau+Z_\tau&=0,\label{b1}\\
 \bar{Q}_{\tau}^TB-\hat{Q}_\tau\hat{B}&=0,\label{b2}\\
 C\bar{P}_\tau-\hat{C}\hat{P}_\tau&=0\label{b3}
 \end{align} where
  \begin{align}
 Z_\tau=\tau\big(\hat{Q}e^{\hat{A}\tau}\hat{B}\hat{B}^Te^{\hat{A}^T\tau}-\bar{Q}^Te^{A\tau}B\hat{B}^Te^{\hat{A}^T\tau}\big).\nonumber
 \end{align}
Assuming that $G(s)$ and $\hat{G}(s)$ have simple poles, define $G_\tau(s)$, $T_\tau(s)$, $\hat{G}_\tau(s)$, and $\hat{T}_\tau(s)$ as
 \begin{align}
 G_\tau(s)&=-e^{-s\tau}C(sI-A)^{-1}e^{A\tau}B,&&&T_\tau(s)&=G_\tau(s)+G(s),\nonumber\\
 \hat{G}_\tau(s)&=-e^{-s\tau}\hat{C}(sI-\hat{A})^{-1}e^{\hat{A}\tau}\hat{B},&&&\hat{T}_\tau(s)&=\hat{G}_\tau(s)+\hat{G}(s).\nonumber
 \end{align}
 Then the local minimizer $\hat{G}(s)$ for $||E(s)||_{\mathcal{H}_{2,\tau}}^2$ satisfies the following bi-tangential Hermite interpolation conditions \citep{sinani2019h2}
\begin{align}
\hat{l}_i^TT^{\prime}_\tau(-\hat{\lambda}_i)\hat{r}_i&=\hat{l}_i^T\hat{T}^{\prime}_\tau(-\hat{\lambda}_i)\hat{r}_i,\label{g1}\\
\hat{l}_i^TT_\tau(-\hat{\lambda}_i)&=\hat{l}_i^T\hat{T}_\tau(-\hat{\lambda}_i),\label{g2}\\
T_\tau(-\hat{\lambda}_i)\hat{r}_i&=\hat{T}_\tau(-\hat{\lambda}_i)\hat{r}_i.\label{g3}
\end{align}
\section{Existing oblique projection-based MOR techniques}
In this section, some important oblique projection-based algorithms for the $\mathcal{H}_{2,\omega}$- and $\mathcal{H}_{2,\tau}$-MOR problems are briefly reviewed.
 \subsection{Frequency-limited two-sided iteration algorithm (FLTSIA)}\label{sub3.2}
FLTSIA \citep{vuillemin2013h2,du2021frequency} is a heuristic generalization of \citep{xu2011optimal} based on analogy and experimental results. Starting with an initial guess of the ROM, the reduction matrices are updated as $\hat{V}=\bar{P}_\omega$ and $\hat{W}=\bar{Q}_\omega$. To ensure the oblique projection condition $\hat{W}^T\hat{V}=I$, the correction equation $\hat{W}=\hat{W}(\hat{V}^T\hat{W})^{-1}$ is used. The ROM is generated by using these reduction matrices, and the process is repeated until the algorithm converges. It will be shown later in this paper that the ROM constructed by FLTSIA does not satisfy the optimality conditions (\ref{a1})-(\ref{a3}), in general.
 \subsection{Frequency-limited pseudo-optimal rational Krylov algorithm (FLPORK)}
 Let us define $B_\Omega$, $C_\Omega$, $\hat{B}_\Omega$, and $\hat{C}_\Omega$ as
 \begin{align}
 B_\Omega&=\begin{bmatrix}B&B_\omega\end{bmatrix},&&&C_\Omega&=\begin{bmatrix}C\\C_\omega\end{bmatrix},\nonumber\\
 \hat{B}_\Omega&=\begin{bmatrix}\hat{B}&\hat{B}_\omega\end{bmatrix},&&&\hat{C}_\Omega&=\begin{bmatrix}\hat{C}\\\hat{C}_\omega\end{bmatrix}.\nonumber
 \end{align}
 Now define $G_\Omega(s)$, $\hat{G}_\Omega(s)$, $H_\Omega(s)$, and $\hat{H}_\Omega(s)$ as
 \begin{align}
 G_\Omega(s)&=C(sI-A)^{-1}B_\Omega,&&&\hat{G}_\Omega(s)&=\hat{C}(sI-\hat{A})^{-1}\hat{B}_\Omega,\nonumber\\
 H_\Omega(s)&=C_\Omega(sI-A)^{-1}B,&&&\hat{H}_\Omega(s)&=\hat{C}_\Omega(sI-\hat{A})^{-1}\hat{B}.\nonumber
 \end{align}
Assuming that $G(s)$ and $\hat{G}(s)$ have simple poles, it is shown in \citep{zulfiqar2020frequency} that when $\hat{G}(s)$ satisfies the following tangential interpolation conditions
 \begin{align}
 \bar{l}_i^TH_\Omega(-\hat{\lambda}_i)&=\bar{l}_i^T\hat{H}_\Omega(-\hat{\lambda}_i),\label{a21}\\
  G_\Omega(-\hat{\lambda}_i)\bar{r}_i&=\hat{G}_\Omega(-\hat{\lambda}_i)\bar{r}_i\label{a31}
 \end{align} where $\bar{l}_i=\begin{bmatrix}F_\omega[\hat{\lambda}_i]\hat{l}_i\\\hat{l}_i\end{bmatrix}$ and $\bar{r}_i=\begin{bmatrix}F_\omega[\hat{\lambda}_i]\hat{r}_i\\\hat{r}_i\end{bmatrix}$, the optimality conditions (\ref{a2}) and (\ref{a3}) are respectively satisfied. The input reduction matrix $\hat{V}$ in FLPORK \citep{zulfiqar2020frequency} is obtained as
  \begin{align}
	Ran(\hat{V})=\underset {i=1,\cdots,r}{span}\{(\hat{\sigma}_iI-A)^{-1}B_\Omega \bar{b}_i\}\label{e28n}
\end{align} where $\hat{\sigma}_i$ is the interpolation point, $\hat{b}_i$ is the right tangential direction, and $\bar{b}_i=\begin{bmatrix}F_\omega[-\hat{\sigma}_i]\hat{b}_i\\\hat{b}_i\end{bmatrix}$. Now define the oblique projection $\Pi=\hat{V}\hat{W}^T$ wherein $\hat{W}$ is arbitrary. Also, define $B_\bot$, $\hat{S}$, and $\bar{L}$ as
\begin{align}
B_\bot&=(I-\Pi)B_\Omega,&\bar{L}&=(B_\bot^TB_\bot)^{-1}B_\bot(I-\Pi)A\hat{V}&\hat{S}&=\hat{W}^T(A\hat{V}-B_\Omega\bar{L}).\nonumber
\end{align} Further, partition $\bar{L}$ as $\bar{L}=\begin{bmatrix}\hat{L}_\omega\\\hat{L}\end{bmatrix}$ where $\hat{L}_\omega=\hat{L}F_\omega[-\hat{S}]$. The frequency-limited observability gramian $\hat{Q}_{s,\omega}$ of the pair $(-\hat{S},\hat{L})$ solves the following Lyapunov equation
\begin{align}
-\hat{S}^T\hat{Q}_{s,\omega}-\hat{Q}_{s,\omega}\hat{S}+\hat{L}^T\hat{L}_\omega+\hat{L}_\omega^T\hat{L}&=0.\nonumber
\end{align}
Then $\hat{G}(s)$ that satisfies the optimality condition (\ref{a3}) can be obtained as
\begin{align}
\hat{A}&=-\hat{Q}_{s,\omega}^{-1}\hat{S}^T\hat{Q}_{s,\omega},&\hat{B}&=-\hat{Q}_{s,\omega}^{-1}\hat{L}^T,&\hat{C}&=C\hat{V}.\nonumber
\end{align} A dual result also exists that constructs a ROM, which satisfies the optimality condition (\ref{a2}). The output reduction matrix $\hat{W}$ in this case is obtained as
  \begin{align}
	Ran(\hat{W})=\underset {i=1,\cdots,r}{span}\{(\hat{\sigma}_iI-A)^{-T}C_\Omega^T \bar{c}_i^T\}\label{e29n}
\end{align} where $\hat{c}_i$ is the left tangential direction and $\bar{c}_i=\begin{bmatrix}F_\omega[-\hat{\sigma}_i]\hat{c}_i&\hat{c}_i\end{bmatrix}$. Now define the oblique projection $\Pi=\hat{V}\hat{W}^T$ wherein $\hat{V}$ is arbitrary. Also, define $C_\bot$, $\bar{L}$, and $\hat{S}$ as
\begin{align}
C_\bot&=C_\Omega(I-\Pi), &\bar{L}&=\hat{W}^TA(I-\Pi)C_\bot^T(C_\bot C_\bot^T)^{-1},&\hat{S}&=(\hat{W}^TA-\bar{L}C_\Omega)\hat{V}.\nonumber
\end{align}
Further, partition $\bar{L}$ as $\bar{L}=\begin{bmatrix}\hat{L}_\omega&\hat{L}\end{bmatrix}$ where $\hat{L}_\omega=F_\omega[-\hat{S}]\hat{L}$. The frequency-limited controllability gramian $\hat{P}_{s,\omega}$ of the pair $(-\hat{S},\hat{L})$ solves the following Lyapunov equation
\begin{align}
-\hat{S}\hat{P}_{s,\omega}-\hat{P}_{s,\omega}\hat{S}^T+\hat{L}_\omega\hat{L}^T+\hat{L}\hat{L}_\omega^T&=0.\nonumber
\end{align}
Then $\hat{G}(s)$ that satisfies the optimality condition (\ref{a2}) can be obtained as
\begin{align}
\hat{A}&=-\hat{P}_{s,\omega}\hat{S}^T\hat{P}_{s,\omega}^{-1},&\hat{B}&=W^TB,&\hat{C}&=-\hat{L}^T\hat{P}_{s,\omega}^{-1}.\nonumber
\end{align}
\begin{remark}
It was later identified in \citep{zulfiqar2019adaptive} that the interpolation conditions (\ref{a21}) and (\ref{a21}) are the necessary conditions for the local optimum and are equivalent to (\ref{a2}) and (\ref{a3}), respectively, when $G(s)$ and $\hat{G}(s)$ have simple poles.
\end{remark}
\subsection{Time-limited iterative rational Krylov algorithm (TLIRKA)}
TLIRKA \citep{goyal2019time} is a heuristic generalization of the iterative rational Krylov algorithm (IRKA) \citep{gugercin2008h_2} for the time-limited case. Let $\hat{G}(s)$ has simple poles and $\hat{A}=\hat{R}\hat{\Lambda} \hat{R}^{-1}$ be the spectral factorization of $\hat{A}$ where $\hat{\Lambda}=diag(\hat{\lambda}_1,\cdots,\hat{\lambda}_r)$. Starting with a random guess of the ROM, the reduction matrices in TLIRKA are computed as $\hat{V}=\bar{P}_\tau \hat{R}^{-*}$ and $\hat{W}=\bar{Q}_\tau \hat{R}$. To ensure the oblique projection condition $\hat{W}^*\hat{V}=I$, $\hat{V}$ and $\hat{W}$ are updated as $\hat{V}=orth(\hat{V})$, $\hat{W}=orth(\hat{W})$, and $\hat{W}=\hat{W}(\hat{V}^*\hat{W})^{-1}$. The ROM is generated by using these reduction matrices, and the process is repeated until the algorithm converges. In general, The ROM constructed by TLIRKA does not satisfy the optimality conditions (\ref{b1})-(\ref{b3}).
\subsection{Time-limited pseudo-optimal rational Krylov algorithm (TLPORK)}\label{sub3.4}
Let us define $B_{\mathcal{T}}$, $C_{\mathcal{T}}$, $\hat{B}_{\mathcal{T}}$, and $\hat{C}_{\mathcal{T}}$ as
 \begin{align}
 B_{\mathcal{T}}&=\begin{bmatrix}B&-B_\tau\end{bmatrix},&&&C_{\mathcal{T}}&=\begin{bmatrix}C\\-C_\tau\end{bmatrix},\nonumber\\
 \hat{B}_{\mathcal{T}}&=\begin{bmatrix}\hat{B}&-\hat{B}_\tau\end{bmatrix},&&&\hat{C}_{\mathcal{T}}&=\begin{bmatrix}\hat{C}\\-\hat{C}_\tau\end{bmatrix}.\nonumber
 \end{align}
 Now define $G_{\mathcal{T}}(s)$, $\hat{G}_{\mathcal{T}}(s)$, $H_{\mathcal{T}}(s)$, and $\hat{H}_{\mathcal{T}}(s)$ as
 \begin{align}
 G_{\mathcal{T}}(s)&=C(sI-A)^{-1}B_{\mathcal{T}},&&&\hat{G}_{\mathcal{T}}(s)&=\hat{C}(sI-\hat{A})^{-1}\hat{B}_{\mathcal{T}},\nonumber\\
 H_{\mathcal{T}}(s)&=C_{\mathcal{T}}(sI-A)^{-1}B,&&&\hat{H}_{\mathcal{T}}(s)&=\hat{C}_{\mathcal{T}}(sI-\hat{A})^{-1}\hat{B}.\nonumber
 \end{align}
 Assuming that $G(s)$ and $\hat{G}(s)$ have simple poles, it is shown in \citep{zulfiqar2020time} that $\hat{G}(s)$ satisfies the optimality conditions (\ref{b2}) and (\ref{b3}) if the following tangential interpolation conditions are respectively satisfied
 \begin{align}
 \tilde{l}_i^TH_{\mathcal{T}}(-\hat{\lambda}_i)&=\tilde{l}_i^T\hat{H}_{\mathcal{T}}(-\hat{\lambda}_i)\label{b21},\\
  G_{\mathcal{T}}(-\hat{\lambda}_i)\tilde{r}_i&=\hat{G}_{\mathcal{T}}(-\hat{\lambda}_i)\tilde{r}_i\label{b31}
 \end{align} where $\tilde{l}_i=\begin{bmatrix}\hat{l}_i\\e^{\hat{\lambda}_it}\hat{l}_i\end{bmatrix}$ and $\tilde{r}_i=\begin{bmatrix}\hat{r}_i\\e^{\hat{\lambda}_it}\hat{r}_i\end{bmatrix}$. The input reduction matrix $\hat{V}$ in TLPORK \citep{zulfiqar2020time} is obtained as
  \begin{align}
	Ran(\hat{V})=\underset {i=1,\cdots,r}{span}\{(\hat{\sigma}_iI-A)^{-1}B_{\mathcal{T}} \tilde{b}_i\}\label{eq32n}
\end{align} where $\tilde{b}_i=\begin{bmatrix}\hat{b}_i\\e^{-\hat{\sigma}_it}\hat{b}_i\end{bmatrix}$. Now define the oblique projection $\Pi=\hat{V}\hat{W}^T$ wherein $\hat{W}$ is arbitrary. Also, define $B_\bot$, $\hat{S}$, and $\bar{L}$ as
\begin{align}
B_\bot&=(I-\Pi)B_{\mathcal{T}},&\bar{L}&=(B_\bot^TB_\bot)^{-1}B_\bot(I-\Pi)A\hat{V}&\hat{S}&=\hat{W}^T(A\hat{V}-B_{\mathcal{T}}\bar{L}).\nonumber
\end{align} Further, partition $\bar{L}$ as $\bar{L}=\begin{bmatrix}\hat{L}\\\hat{L}_\tau\end{bmatrix}$ where $\hat{L}_\tau=\hat{L}e^{-\hat{S}t}$. The time-limited observability gramian $\hat{Q}_{s,\tau}$ of the pair $(-\hat{S},\hat{L})$ solves the following Lyapunov equation
\begin{align}
-\hat{S}^T\hat{Q}_{s,\tau}-\hat{Q}_{s,\tau}\hat{S}+\hat{L}^T\hat{L}-\hat{L}_\tau^T\hat{L}_\tau&=0.\nonumber
\end{align}
Then $\hat{G}(s)$ that satisfies the optimality condition (\ref{b3}) can be obtained as
\begin{align}
\hat{A}&=-\hat{Q}_{s,\tau}^{-1}\hat{S}^T\hat{Q}_{s,\tau},&\hat{B}&=-\hat{Q}_{s,\tau}^{-1}\hat{L}^T,&\hat{C}&=C\hat{V}.\nonumber
\end{align} A dual result also exists that constructs a ROM, which satisfies the optimality condition (\ref{b2}). The output reduction matrix $\hat{W}$ in this case is obtained as
  \begin{align}
	Ran(\hat{W})=\underset {i=1,\cdots,r}{span}\{(\hat{\sigma}_iI-A)^{-T}C_{\mathcal{T}}^T \tilde{c}_i^T\}\label{eq33n}
\end{align} where $\hat{c}_i$ is the left tangential direction and $\tilde{c}_i=\begin{bmatrix}\hat{c}_i&e^{-\hat{\sigma}_it}\hat{c}_i\end{bmatrix}$. Now define the oblique projection $\Pi=\hat{V}\hat{W}^T$ wherein $\hat{V}$ is arbitrary. Also, define $C_\bot$, $\bar{L}$, and $\hat{S}$ as
\begin{align}
C_\bot&=C_{\mathcal{T}}(I-\Pi), &\bar{L}&=\hat{W}^TA(I-\Pi)C_\bot^T(C_\bot C_\bot^T)^{-1},&\hat{S}&=(\hat{W}^TA-\bar{L}C_{\mathcal{T}})\hat{V}.\nonumber
\end{align}
Further, partition $\bar{L}$ as $\bar{L}=\begin{bmatrix}\hat{L}&\hat{L}_\tau\end{bmatrix}$ where $\hat{L}_\tau=e^{-\hat{S}t}\hat{L}$. The time-limited controllability gramian $\hat{P}_{s,\tau}$ of the pair $(-\hat{S},\hat{L})$ solves the following Lyapunov equation
\begin{align}
-\hat{S}\hat{P}_{s,\tau}-\hat{P}_{s,\tau}\hat{S}^T+\hat{L}\hat{L}^T-\hat{L}_\tau\hat{L}_\tau^T&=0.\nonumber
\end{align}
Then $\hat{G}(s)$ that satisfies the optimality condition () can be obtained as
\begin{align}
\hat{A}&=-\hat{P}_{s,\tau}\hat{S}^T\hat{P}_{s,\tau}^{-1},&\hat{B}&=W^TB,&\hat{C}&=-\hat{L}^T\hat{P}_{s,\tau}^{-1}.\nonumber
\end{align}
\section{$\mathcal{H}_{2,\omega}$- and $\mathcal{H}_{2,\tau}$-optimal MOR}
In this section, the problems of $\mathcal{H}_{2,\omega}$- and $\mathcal{H}_{2,\tau}$-optimal MOR are considered within the oblique projection framework. An inherent difficulty in constructing a local optimum within the oblique projection framework is discussed. The conditions for exact satisfaction of the optimality conditions are also discussed. Further, the equivalence between some gramians-based and interpolation-based optimality conditions is established.
\subsection{Limitation in the oblique projection framework}
Before beginning the discussion on an inherent limitation of the oblique projection framework in satisfying the optimality conditions (\ref{a1})-(\ref{a3}) and (\ref{b1})-(\ref{b3}), let us represent the optimality conditions (\ref{a1}) and (\ref{b1}) a bit differently so that it resembles Wilson's condition \citep{wilson1970optimum} of the standard $\mathcal{H}_2$-optimal MOR case.
\begin{proposition}
Let $F_\omega[\hat{A}]$ has full rank. Then the optimality condition (\ref{a1}) can be expressed as
\begin{align}
\bar{Q}_\omega^T\bar{P}_\omega-\hat{Q}_\omega\hat{P}_\omega+X_\omega=0\nonumber
\end{align} where
\begin{align}
X_\omega&=F_\omega[\hat{A}^T]Z_\omega-\bar{Q}^TF_\omega[A]\bar{P}_\omega+\hat{Q}F_\omega[\hat{A}]\hat{P}_\omega.\nonumber
\end{align}Similarly, the optimality condition (\ref{b1}) can be rewritten as
\begin{align}
\bar{Q}_\tau^T\bar{P}_\tau-\hat{Q}_\tau\hat{P}_\tau+X_\tau&=0\nonumber
\end{align}
where
\begin{align}
X_\tau&=Z_\tau+e^{\hat{A}^T\tau}\bar{Q}^Te^{A\tau}\bar{P}_\tau-e^{\hat{A}^T\tau}\hat{Q}e^{\hat{A}\tau}\hat{P}_\tau.\nonumber
\end{align}
\end{proposition}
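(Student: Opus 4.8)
The plan is to re-express the Sylvester/observability gramians $\bar{Q},\hat{Q}$ that appear in (\ref{a1}) and (\ref{b1}) in terms of their frequency- and time-limited counterparts $\bar{Q}_\omega,\hat{Q}_\omega$ and $\bar{Q}_\tau,\hat{Q}_\tau$. First I would record two elementary facts about $F_\omega[\cdot]$: as a primary matrix function of its argument it commutes with that argument (so $A^TF_\omega[A^T]=F_\omega[A^T]A^T$, and likewise for $\hat{A}$), and $F_\omega[A]^T=F_\omega[A^T]$. The latter holds because transposing $(j\omega I+A)(-j\omega I+A)^{-1}$ gives $(-j\omega I+A^T)^{-1}(j\omega I+A^T)$, whose two factors commute (both are functions of $A^T$), returning the defining argument of $F_\omega[A^T]$; the matrix logarithm and the scalar $\tfrac{j}{2\pi}$ then pass through the transpose. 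For the time-limited half I would use only that $e^{A\tau}$ commutes with $A$ and $(e^{A\tau})^T=e^{A^T\tau}$.

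The crux of the argument is the pair of gramian identities
\[
\bar{Q}_\omega=F_\omega[A^T]\bar{Q}+\bar{Q}F_\omega[\hat{A}],\qquad \hat{Q}_\omega=F_\omega[\hat{A}^T]\hat{Q}+\hat{Q}F_\omega[\hat{A}],
\]
together with the time-limited analogues
\[
\bar{Q}_\tau=\bar{Q}-e^{A^T\tau}\bar{Q}e^{\hat{A}\tau},\qquad \hat{Q}_\tau=\hat{Q}-e^{\hat{A}^T\tau}\hat{Q}e^{\hat{A}\tau}.
\]
Each is proved by inserting the right-hand side into its defining equation ((\ref{d2}), (\ref{d4}), (\ref{e2}), (\ref{e4})): using the commutation relations above, the result collapses to $F_\omega[A^T]\big(A^T\bar{Q}+\bar{Q}\hat{A}+C^T\hat{C}\big)+\big(A^T\bar{Q}+\bar{Q}\hat{A}+C^T\hat{C}\big)F_\omega[\hat{A}]$ in the frequency case, and to $-e^{A^T\tau}\big(A^T\bar{Q}+\bar{Q}\hat{A}+C^T\hat{C}\big)e^{\hat{A}\tau}$ in the time case, both of which vanish because $\bar{Q}$ solves $A^T\bar{Q}+\bar{Q}\hat{A}+C^T\hat{C}=0$; stability of $A$ and $\hat{A}$ makes the candidate the unique solution. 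Getting this commutation-and-transpose bookkeeping exactly right is the main obstacle — the remainder is linear algebra.

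With these identities in hand I would substitute into $\bar{Q}_\omega^T\bar{P}_\omega-\hat{Q}_\omega\hat{P}_\omega$. Writing $\bar{Q}_\omega^T=\bar{Q}^TF_\omega[A]+F_\omega[\hat{A}^T]\bar{Q}^T$ and expanding, the terms carrying the left factor $F_\omega[\hat{A}^T]$ combine into $F_\omega[\hat{A}^T]\big(\bar{Q}^T\bar{P}_\omega-\hat{Q}\hat{P}_\omega\big)$, leaving the remainder $\bar{Q}^TF_\omega[A]\bar{P}_\omega-\hat{Q}F_\omega[\hat{A}]\hat{P}_\omega$. Collecting, this produces exactly
\[
\bar{Q}_\omega^T\bar{P}_\omega-\hat{Q}_\omega\hat{P}_\omega+X_\omega=F_\omega[\hat{A}^T]\big(\bar{Q}^T\bar{P}_\omega-\hat{Q}\hat{P}_\omega+Z_\omega\big),
\]
with $X_\omega$ as defined in the statement. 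Because $F_\omega[\hat{A}]$, and hence $F_\omega[\hat{A}^T]$, has full rank, the left-hand side vanishes if and only if the bracket vanishes, i.e. if and only if (\ref{a1}) holds; this is the asserted reformulation and pinpoints that the full-rank hypothesis is needed only for the reverse implication.

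The time-limited computation runs along the same lines but is cleaner: since $\bar{Q}$ enters $\bar{Q}_\tau$ with unit coefficient, substituting the exponential identities into $\bar{Q}_\tau^T\bar{P}_\tau-\hat{Q}_\tau\hat{P}_\tau$ yields the exact equality
\[
\bar{Q}_\tau^T\bar{P}_\tau-\hat{Q}_\tau\hat{P}_\tau+X_\tau=\bar{Q}^T\bar{P}_\tau-\hat{Q}\hat{P}_\tau+Z_\tau,
\]
with $X_\tau$ as stated. Here no pre-multiplying factor appears, so the equivalence with (\ref{b1}) is immediate and no rank condition is required; this is precisely the role played for free by the automatic invertibility of $e^{\hat{A}\tau}$, which explains why a full-rank hypothesis is imposed only in the frequency-limited half.
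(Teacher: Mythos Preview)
Your proposal is correct and follows essentially the same route as the paper: both arguments rest on the identities $\bar{Q}_\omega=F_\omega[A^T]\bar{Q}+\bar{Q}F_\omega[\hat{A}]$, $\hat{Q}_\omega=F_\omega[\hat{A}^T]\hat{Q}+\hat{Q}F_\omega[\hat{A}]$ (and their time-limited analogues), and both amount to pre-multiplying (\ref{a1}) by $F_\omega[\hat{A}^T]$ and substituting. The paper simply asserts these identities (``Note that\ldots'') and writes the substitution in two lines, whereas you additionally verify them via uniqueness in the defining Sylvester equations and make explicit that the full-rank hypothesis is needed only for the reverse implication---a clarification the paper leaves implicit.
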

\begin{proof}
Note that
\begin{align}
\bar{Q}_\omega^T\bar{P}_\omega&=\big(F_\omega[\hat{A}^T]\bar{Q}^T+\bar{Q}^TF_\omega[A]\big)\bar{P}_\omega&\textnormal{and}&&\hat{Q}_\omega\hat{P}_\omega&=\big(F_\omega[\hat{A}^T]\hat{Q}+\hat{Q}F_\omega[\hat{A}]\big)\hat{P}_\omega.\nonumber
\end{align}
Thus
\begin{align}
F_\omega[\hat{A}^T]\bar{Q}^T\bar{P}_\omega&=\bar{Q}_\omega^T\bar{P}_\omega-\bar{Q}^TF_\omega[A]\bar{P}_\omega,\label{c1}\\
F_\omega[\hat{A}^T]\hat{Q}\hat{P}_\omega&=\hat{Q}_\omega\hat{P}_\omega-\hat{Q}F_\omega[\hat{A}]\hat{P}_\omega.\label{c2}
\end{align}
By pre-multiplying with $F_\omega[\hat{A}^T]$, the equation (\ref{a1}) becomes
\begin{align}
F_\omega[\hat{A}^T]\bar{Q}^T\bar{P}_\omega-F_\omega[\hat{A}^T]\hat{Q}\hat{P}_\omega+F_\omega[\hat{A}^T]Z_\omega=0.\label{c3}
\end{align}
Further, by putting (\ref{c1}) and (\ref{c2}), the equation (\ref{c3}) becomes
\begin{align}
\bar{Q}_\omega^T\bar{P}_\omega-\hat{Q}_\omega\hat{P}_\omega+X_\omega=0.\nonumber
\end{align}
Similarly, note that
\begin{align}
\bar{Q}_\tau^T\bar{P}_\tau&=\big(\bar{Q}^T-e^{\hat{A}^T\tau}\bar{Q}^Te^{A\tau}\big)\bar{P}_\tau&\textnormal{and}&&\hat{Q}_\tau\hat{P}_\tau&=\big(\hat{Q}-e^{\hat{A}^T}\hat{Q}e^{\hat{A}\tau}\big)\hat{P}_\tau.\nonumber
\end{align}
Thus
\begin{align}
\bar{Q}^T\bar{P}_\tau&=\bar{Q}_\tau^T\bar{P}_\tau+e^{\hat{A}^T\tau}\bar{Q}^Te^{A\tau}\bar{P}_\tau,\label{c4}\\
\hat{Q}\hat{P}_\tau&=\hat{Q}_\tau\hat{P}_\tau+e^{\hat{A}^T\tau}\hat{Q}e^{\hat{A}\tau}\hat{P}_\tau.\label{c5}
\end{align}
By putting (\ref{c4}) and (\ref{c5}), the equation (\ref{b1}) becomes
\begin{align}
\bar{Q}_\tau^T\bar{P}_\tau-\hat{Q}_\tau\hat{P}_\tau+X_\tau=0.\nonumber
\end{align}
This completes the proof.
\end{proof}
Let us assume that $\hat{P}_\omega$ and $\hat{Q}_\omega$ are invertible. Then the optimal choices of $\hat{B}$ and $\hat{C}$ according to the optimality conditions (\ref{a2}) and (\ref{a3}), respectively, are given by
\begin{align}
\hat{B}&=\hat{Q}_\omega^{-1}\bar{Q}_\omega^TB&&\textnormal{and}&\hat{C}&=C\bar{P}_\omega\hat{P}_{\omega}^{-1}.\nonumber
\end{align}
Similarly, let us assume that $\hat{P}_\tau$ and $\hat{Q}_\tau$ are invertible. Then the optimal choices of $\hat{B}$ and $\hat{C}$ according to the optimality conditions (\ref{b2}) and (\ref{b3}), respectively, are given by
\begin{align}
\hat{B}&=\hat{Q}_\tau^{-1}\bar{Q}_\tau^TB&&\textnormal{and}&\hat{C}&=C\bar{P}_\tau\hat{P}_{\tau}^{-1}.\nonumber
\end{align}
If ROMs are obtained within the oblique projection framework, the oblique projections for optimal choices $\hat{B}$ and $\hat{C}$ are given by $\Pi=\bar{P}_\omega\hat{P}_{\omega}^{-1}\hat{Q}_\omega^{-1}\bar{Q}_\omega^T$ (with $\hat{V}=\bar{P}_\omega\hat{P}_{\omega}^{-1}$ and $\hat{W}=\bar{Q}_\omega\hat{Q}_\omega^{-1}$) and $\Pi=\bar{P}_\tau\hat{P}_{\tau}^{-1}\hat{Q}_\tau^{-1}\bar{Q}_\tau^T$ (with $\hat{V}=\bar{P}_\tau\hat{P}_{\tau}^{-1}$ and $\hat{W}=\bar{Q}_\tau\hat{Q}_\tau^{-1}$). Further, since $\hat{W}^T\hat{V}=I$, $\bar{Q}_\omega^T\bar{P}_\omega-\hat{Q}_\omega\hat{P}_{\omega}=0$ and $\bar{Q}_\tau^T\bar{P}_\tau-\hat{Q}_\tau\hat{P}_{\tau}=0$. Then deviations in the optimal choices of $\hat{A}$ are given by $X_\omega$ and $X_\tau$. It can readily be verified that $X_\omega=0$ when $F_\omega[A]=\hat{V}F_\omega[\hat{A}]\hat{W}^T$ and $\bar{Q}=\hat{W}\hat{Q}$. Similarly, $X_\tau=0$ when $e^{A\tau}=\hat{V}e^{\hat{A}\tau}\hat{W}^T$ and $\bar{Q}=\hat{W}\hat{Q}$. Note that $\hat{W}\hat{Q}$, $\hat{V}F_\omega[\hat{A}]\hat{W}^T$, and $\hat{V}e^{\hat{A}\tau}\hat{W}^T$ are the oblique projection-based approximations of $\bar{Q}$, $F_\omega[A]$, and $e^{A\tau}$, respectively \citep{benner2016frequency,kurschner2018balanced}. Thus $\bar{Q}\neq\hat{W}\hat{Q}$, $F_\omega[A]\neq\hat{V}F_\omega[\hat{A}]\hat{W}^T$, and $e^{A\tau}\neq\hat{V}e^{\hat{A}\tau}\hat{W}^T$, in general. Resultantly, $X_\omega\neq0$ and $X_\tau\neq0$, in general. Therefore, it is inherently not possible to obtain an optimal choice of $\hat{A}$ within the oblique projection framework when the optimal choices of $\hat{B}$ and $\hat{C}$ are made.
\subsection{Effect of the change of basis}
Note that $\hat{P}_\omega^{-1}$ and $\hat{Q}_\omega^{-1}$ do not change the subspaces but only change the basis of $\bar{P}_\omega$ and $\bar{Q}_\omega$ in $\hat{V}=\bar{P}_\omega\hat{P}_\omega^{-1}$ and $\hat{W}=\bar{Q}_\omega\hat{Q}_\omega^{-1}$, respectively. Similarly, $\hat{P}_\tau^{-1}$ and $\hat{Q}_\tau^{-1}$ only change the basis of $\bar{P}_\tau$ and $\bar{Q}_\tau$ in $\hat{V}=\bar{P}_\tau\hat{P}_\tau^{-1}$ and $\hat{W}=\bar{Q}_\tau\hat{Q}_\tau^{-1}$, respectively \citep{gallivan2004sylvester}. Thus one may think of constructing the ROM using the oblique projections $\Pi=\bar{P}_\omega\bar{Q}_\omega^T$ (as done in FLTSI) and $\Pi=\bar{P}_\tau\bar{Q}_\tau^T$ (as done in TLIRKA). This change of basis is harmless in the standard $\mathcal{H}_2$-optimal MOR; see \citep{bennersparse,xu2011optimal}, for instance. However, in the frequency- and time-limited cases, this incurs deviations in the optimal choices of $\hat{B}$ and $\hat{C}$. The next two theorems show that the deviation caused by this change of basis is zero when $F_\omega[A]=\hat{V}F_\omega[\hat{A}]\hat{W}^T$ and $e^{At}=\hat{V}e^{\hat{A}t}\hat{W}^T$, which is not possible in general.
\begin{theorem}
The ROM $(\hat{A},\hat{B},\hat{C})$ obtained by using the oblique projection $\Pi=\bar{P}_\omega\bar{Q}_\omega^T$ (with $\hat{V}=\bar{P}_\omega$ and $\hat{W}=\bar{Q}_\omega$) satisfies the optimality conditions (\ref{a2}) and (\ref{a3}) provided $B_\omega=\hat{V}\hat{B}_\omega$ and $C_\omega=\hat{C}_\omega\hat{W}^T$.
\end{theorem}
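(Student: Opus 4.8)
The plan is to reduce both optimality conditions to the single claim that the reduced frequency-limited gramians collapse to the identity, i.e.\ $\hat{P}_\omega=I$ and $\hat{Q}_\omega=I$, under the stated hypotheses together with the biorthogonality $\hat{W}^T\hat{V}=I$. Indeed, once $\hat{P}_\omega=I$ is available, condition (\ref{a3}) is immediate because $C\bar{P}_\omega=C\hat{V}=\hat{C}$ while $\hat{C}\hat{P}_\omega=\hat{C}$, so that $C\bar{P}_\omega-\hat{C}\hat{P}_\omega=0$; likewise, once $\hat{Q}_\omega=I$, condition (\ref{a2}) follows from $\bar{Q}_\omega^TB=\hat{W}^TB=\hat{B}=\hat{Q}_\omega\hat{B}$. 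Thus the whole statement rests on identifying $\hat{P}_\omega$ and $\hat{Q}_\omega$.

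First I would establish $\hat{P}_\omega=I$ by projecting the Sylvester equation (\ref{d1}) for $\bar{P}_\omega$ onto the reduced coordinates, that is, by left-multiplying it with $\hat{W}^T$. Using $\hat{V}=\bar{P}_\omega$, the definitions $\hat{A}=\hat{W}^TA\hat{V}$ and $\hat{B}=\hat{W}^TB$, the biorthogonality $\hat{W}^T\hat{V}=I$, and the hypothesis $B_\omega=\hat{V}\hat{B}_\omega$ (which yields $\hat{W}^TB_\omega=\hat{W}^T\hat{V}\hat{B}_\omega=\hat{B}_\omega$), every term reduces to a reduced-order quantity and the projected equation becomes
\begin{align}
\hat{A}+\hat{A}^T+\hat{B}\hat{B}_\omega^T+\hat{B}_\omega\hat{B}^T=0.\nonumber
\end{align}
This is precisely the Lyapunov equation (\ref{d3}) for $\hat{P}_\omega$ evaluated at $\hat{P}_\omega=I$. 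Since $\hat{A}$ is stable, (\ref{d3}) possesses a unique solution, and therefore $\hat{P}_\omega=I$.

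A completely dual computation delivers $\hat{Q}_\omega=I$: left-multiplying the Sylvester equation (\ref{d2}) for $\bar{Q}_\omega$ with $\hat{V}^T$ and invoking $\hat{W}=\bar{Q}_\omega$, the relations $\hat{A}=\hat{W}^TA\hat{V}$, $\hat{C}=C\hat{V}$, $\hat{W}^T\hat{V}=I$, and the hypothesis $C_\omega=\hat{C}_\omega\hat{W}^T$ (which gives $C_\omega\hat{V}=\hat{C}_\omega\hat{W}^T\hat{V}=\hat{C}_\omega$), the projected equation reduces to
\begin{align}
\hat{A}^T+\hat{A}+\hat{C}^T\hat{C}_\omega+\hat{C}_\omega^T\hat{C}=0,\nonumber
\end{align}
which is (\ref{d4}) evaluated at $\hat{Q}_\omega=I$; uniqueness again forces $\hat{Q}_\omega=I$. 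Combining the two halves with the reductions noted in the first paragraph yields (\ref{a2}) and (\ref{a3}).

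The only genuinely delicate point I anticipate is the appeal to uniqueness of the Lyapunov solutions (\ref{d3}) and (\ref{d4}), which requires $\hat{A}$ to have no two eigenvalues summing to zero; this is guaranteed by stability of the reduced model. I also expect the bookkeeping around the hypotheses $B_\omega=\hat{V}\hat{B}_\omega$ and $C_\omega=\hat{C}_\omega\hat{W}^T$ to be the crux of the argument: they are exactly what is needed to convert the mixed terms $\hat{W}^TB_\omega$ and $C_\omega\hat{V}$ into their reduced-order counterparts, and without them the projected equations would fail to coincide with (\ref{d3})--(\ref{d4}).
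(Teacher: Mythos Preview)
Your proof is correct and follows essentially the same route as the paper: project (\ref{d1}) by $\hat{W}^T$ and (\ref{d2}) by $\hat{V}^T$, use the hypotheses to reduce the mixed terms, and invoke uniqueness of the Lyapunov solutions to conclude $\hat{P}_\omega=I$ and $\hat{Q}_\omega=I$. Your explicit remark that uniqueness relies on stability of $\hat{A}$ is a useful addition that the paper leaves implicit.
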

\begin{proof}
By multiplying $\hat{W}^T$ from the left, the equation (\ref{d1}) becomes
\begin{align}
\hat{W}^TA\bar{P}_\omega+\hat{W}^T\bar{P}_\omega\hat{A}^T+\hat{W}^TB\hat{B}_\omega^T+\hat{W}^TB_\omega \hat{B}^T=0.\nonumber
\end{align}
Since $\hat{W}^T\hat{V}=I$ and $\hat{W}^TB_\omega=\hat{B}_\omega$,
\begin{align}
\hat{A}+\hat{A}^T+\hat{B}\hat{B}_\omega^T+\hat{B}_\omega \hat{B}^T=0.\nonumber
\end{align} Due to uniqueness, $\hat{P}_\omega=I$ and thus $C\bar{P}_\omega-C\hat{P}_\omega=0$.

Similarly, by multiplying $\hat{V}^T$ from the left, the equation (\ref{d2}) becomes
\begin{align}
\hat{V}^TA^T\bar{Q}_\omega+\hat{V}^T\bar{Q}_\omega\hat{A}+\hat{V}^TC^T\hat{C}_\omega+\hat{V}^TC_\omega^T \hat{C}=0.\nonumber
\end{align}
Since $\hat{V}^T\hat{W}=I$ and $C_\omega\hat{V}=\hat{C}_\omega$,
\begin{align}
\hat{A}^T+\hat{A}+\hat{C}^T\hat{C}_\omega+\hat{C}_\omega^T \hat{C}=0.\nonumber
\end{align}
Due to uniqueness, $\hat{Q}_\omega=I$ and thus $\bar{Q}_\omega^TB-\hat{Q}_\omega B=0$. This completes the proof.
\end{proof}
$\hat{V}\hat{B}_\omega$ and $\hat{C}_\omega\hat{W}^T$ are projection-based approximations of $B_\omega$ and $C_\omega$, respectively. Thus $B_\omega\neq \hat{V}\hat{B}_\omega$ and $C_\omega\neq \hat{C}_\omega\hat{W}^T$, in general. The Krylov-subspace based methods to obtain approximations of $B_\omega$ and $C_\omega$ as $\hat{V}\hat{B}_\omega$ and $\hat{C}_\omega\hat{W}^T$, respectively, can be found in \citep{benner2016frequency}.
\begin{theorem}
The ROM $(\hat{A},\hat{B},\hat{C})$ obtained by using the oblique projection $\Pi=\bar{P}_\tau\bar{Q}_\tau^T$ (with $\hat{V}=\bar{P}_\tau$ and $\hat{W}=\bar{Q}_\tau$) satisfies the optimality conditions (\ref{b2}) and (\ref{b3}) provided $B_\tau=\hat{V}\hat{B}_\tau$ and $C_\tau=\hat{C}_\tau\hat{W}^T$.
\end{theorem}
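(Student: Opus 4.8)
The plan is to mirror the proof of the preceding theorem, replacing the frequency-limited Sylvester and Lyapunov equations (\ref{d1})--(\ref{d4}) by their time-limited counterparts (\ref{e1})--(\ref{e4}). As in that proof, the two optimality conditions are established separately by a dual pair of arguments: (\ref{b3}) is obtained by projecting the cross-gramian equation (\ref{e1}) onto the reduced space from the left with $\hat{W}^T$, and (\ref{b2}) is obtained by projecting (\ref{e2}) from the left with $\hat{V}^T$. In each case the goal is to collapse the projected equation into exactly the reduced Lyapunov equation defining $\hat{P}_\tau$ or $\hat{Q}_\tau$, and then to invoke uniqueness of the solution of a stable Lyapunov equation.

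First I would establish (\ref{b3}). Left-multiplying (\ref{e1}) by $\hat{W}^T$ and using $\hat{A}=\hat{W}^TA\hat{V}$, $\hat{W}^T\hat{V}=I$ (so that $\hat{W}^T\bar{P}_\tau=I$), and $\hat{B}=\hat{W}^TB$ reduces the first three terms to $\hat{A}+\hat{A}^T+\hat{B}\hat{B}^T$. The fourth term is where the hypothesis enters: the assumption $B_\tau=\hat{V}\hat{B}_\tau$ gives $\hat{W}^TB_\tau=\hat{W}^T\hat{V}\hat{B}_\tau=\hat{B}_\tau$, so the projected equation becomes $\hat{A}+\hat{A}^T+\hat{B}\hat{B}^T-\hat{B}_\tau\hat{B}_\tau^T=0$. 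This is identical to (\ref{e3}) with $\hat{P}_\tau$ replaced by $I$; since $\hat{A}$ is stable the solution is unique, forcing $\hat{P}_\tau=I$. Because $\hat{V}=\bar{P}_\tau$ and $\hat{C}=C\hat{V}=C\bar{P}_\tau$, condition (\ref{b3}) then reads $C\bar{P}_\tau-\hat{C}\hat{P}_\tau=\hat{C}-\hat{C}=0$.

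The argument for (\ref{b2}) is dual. Left-multiplying (\ref{e2}) by $\hat{V}^T$ and using $\hat{V}^T\hat{W}=I$ together with $\hat{C}=C\hat{V}$ collapses the first three terms to $\hat{A}^T+\hat{A}+\hat{C}^T\hat{C}$, while the hypothesis $C_\tau=\hat{C}_\tau\hat{W}^T$ gives $C_\tau\hat{V}=\hat{C}_\tau\hat{W}^T\hat{V}=\hat{C}_\tau$ and hence $\hat{V}^TC_\tau^T=\hat{C}_\tau^T$. The projected equation then coincides with (\ref{e4}) with $\hat{Q}_\tau$ replaced by $I$, so uniqueness yields $\hat{Q}_\tau=I$, and since $\hat{W}=\bar{Q}_\tau$ and $\hat{B}=\hat{W}^TB=\bar{Q}_\tau^TB$, condition (\ref{b2}) follows as $\bar{Q}_\tau^TB-\hat{Q}_\tau\hat{B}=\hat{B}-\hat{B}=0$.

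The computation is routine; the only real content is recognizing that the two hypotheses $B_\tau=\hat{V}\hat{B}_\tau$ and $C_\tau=\hat{C}_\tau\hat{W}^T$ are precisely what make the exponential terms $B_\tau\hat{B}_\tau^T$ and $C_\tau^T\hat{C}_\tau$ reduce cleanly under projection, so that the projected Sylvester equations match the reduced Lyapunov equations exactly. Without them the $e^{A\tau}$ factors would not collapse and $\hat{P}_\tau,\hat{Q}_\tau$ would differ from $I$. I expect no genuine obstacle beyond this bookkeeping, since $B_\tau=\hat{V}\hat{B}_\tau$ and $C_\tau=\hat{C}_\tau\hat{W}^T$ are exactly the projection-based approximations of $B_\tau$ and $C_\tau$ and, as in the frequency-limited case, they fail to hold in general.
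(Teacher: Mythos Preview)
Your proposal is correct and follows essentially the same approach as the paper: project the Sylvester equations (\ref{e1}) and (\ref{e2}) from the left by $\hat{W}^T$ and $\hat{V}^T$, respectively, use the hypotheses $B_\tau=\hat{V}\hat{B}_\tau$ and $C_\tau=\hat{C}_\tau\hat{W}^T$ to collapse the exponential terms, and invoke uniqueness of the reduced Lyapunov equations (\ref{e3}) and (\ref{e4}) to conclude $\hat{P}_\tau=\hat{Q}_\tau=I$. If anything, your write-up is slightly more explicit than the paper's in spelling out why $\hat{P}_\tau=I$ together with $\hat{C}=C\bar{P}_\tau$ yields (\ref{b3}), and similarly for (\ref{b2}).
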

\begin{proof}
By multiplying $\hat{W}^T$ from the left, the equation (\ref{e1}) becomes
\begin{align}
\hat{W}^TA\bar{P}_\tau+\hat{W}^T\bar{P}_\tau\hat{A}^T+\hat{W}^TB\hat{B}^T-\hat{W}^TB_\tau \hat{B}_\tau^T=0.\nonumber
\end{align}
Since $\hat{W}^T\hat{V}=I$ and $\hat{W}^TB_\tau=\hat{B}_\tau$,
\begin{align}
\hat{A}+\hat{A}^T+\hat{B}\hat{B}^T-\hat{B}_\tau \hat{B}_\tau^T=0.\nonumber
\end{align} Due to uniqueness, $\hat{P}_\tau=I$ and thus $C\bar{P}_\tau-C\hat{P}_\tau=0$.

Similarly, by multiplying $\hat{V}^T$ from the left, the equation (\ref{e2}) becomes
\begin{align}
\hat{V}^TA^T\bar{Q}_\tau+\hat{V}^T\bar{Q}_\tau\hat{A}+\hat{V}^TC^T\hat{C}-\hat{V}^TC_\tau^T \hat{C}_\tau=0.\nonumber
\end{align}
Since $\hat{V}^T\hat{W}=I$ and $C_\tau\hat{V}=\hat{C}_\tau$,
\begin{align}
\hat{A}^T+\hat{A}+\hat{C}^T\hat{C}-\hat{C}_\tau^T \hat{C}_\tau=0.\nonumber
\end{align}
Due to uniqueness, $\hat{Q}_\tau=I$ and thus $\bar{Q}_\tau^TB-\hat{Q}_\tau B=0$. This completes the proof.
\end{proof}
$\hat{V}\hat{B}_\tau$ and $\hat{C}_\tau\hat{W}^T$ are projection-based approximations of $B_\tau$ and $C_\tau$, respectively. Thus $B_\tau\neq \hat{V}\hat{B}_\tau$ and $C_\tau\neq \hat{C}_\tau\hat{W}^T$, in general. The Krylov-subspace based methods to obtain approximations of $B_\tau$ and $C_\tau$ as $\hat{V}\hat{B}_\tau$ and $\hat{C}_\tau\hat{W}^T$, respectively, can be found in \citep{kurschner2018balanced}.
\subsection{Equivalence between the optimality conditions}
In \citep{zulfiqar2019adaptive}, it is shown that the interpolation conditions (\ref{f2}) and (\ref{f3}) are equivalent to (\ref{a21}) and (\ref{a31}), respectively, when $G(s)$ and $\hat{G}(s)$ have simple poles. In this subsection, we establish equivalence between (some of) the gramians-based optimality conditions in \citep{goyal2019time} and the interpolation-based optimality conditions in \citep{sinani2019h2} for the $\mathcal{H}_{2,\tau}$-optimal MOR problem when $G(s)$ and $\hat{G}(s)$ have simple poles. Further, the interpolation conditions in \citep{sinani2019h2} and \cite{zulfiqar2020time} are shown to be equivalent when $G(s)$ and $\hat{G}(s)$ have simple poles.
 \begin{theorem}
When $G(s)$ and $\hat{G}(s)$ have simple poles, the following statements are true:
\begin{enumerate}
\item The optimality condition (\ref{b3}) is equivalent to the tangential interpolation condition (\ref{b31}).\\
\item The tangential interpolation conditions (\ref{b31}) and (\ref{g3}) are equivalent.\\
\item The optimality condition (\ref{b2}) is equivalent to the tangential interpolation condition (\ref{b21}).\\
\item The tangential interpolation conditions (\ref{b21}) and (\ref{g2}) are equivalent.
\end{enumerate}
\end{theorem}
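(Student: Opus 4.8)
The plan is to handle the four claims with two distinct tools. Statements 1 and 3 each relate a Gramian-based condition to a tangential interpolation condition, and I would prove them by the Sylvester-equation-plus-eigenvector technique that is standard in $\mathcal{H}_2$-optimal interpolation theory (and already used for the frequency-limited case in \citep{zulfiqar2019adaptive}). Statements 2 and 4 relate the two families of interpolation conditions from \citep{sinani2019h2} and \citep{zulfiqar2020time} to each other, and I expect these to reduce to an outright identity once the definitions of $G_{\mathcal{T}}$, $H_{\mathcal{T}}$, $T_\tau$ and the tangential directions $\tilde{r}_i$, $\tilde{l}_i$ are unfolded.

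For Statement 1, I would write the spectral factorization $\hat{A}=\hat{R}\hat{\Lambda}\hat{R}^{-1}$ with $\hat{\Lambda}=\mathrm{diag}(\hat{\lambda}_1,\dots,\hat{\lambda}_r)$, normalized so that the columns of $\hat{C}\hat{R}$ are the residue directions $\hat{l}_i$ and the rows of $\hat{R}^{-1}\hat{B}$ are $\hat{r}_i^T$; then the standard basis vector $e_i$ gives the eigenvector $\hat{R}^{-T}e_i$ of $\hat{A}^T$ for $\hat{\lambda}_i$. Multiplying the Sylvester equation (\ref{e1}) on the right by $\hat{R}^{-T}e_i$ and using $\hat{A}^T\hat{R}^{-T}e_i=\hat{\lambda}_i\hat{R}^{-T}e_i$, together with $\hat{B}^T\hat{R}^{-T}e_i=\hat{r}_i$ and $\hat{B}_\tau^T\hat{R}^{-T}e_i=e^{\hat{\lambda}_i\tau}\hat{r}_i$ (the latter because $\hat{B}_\tau=e^{\hat{A}\tau}\hat{B}$ acts through the eigenvalue), the right-hand side collapses to $-B_{\mathcal{T}}\tilde{r}_i$, so the $i$-th column satisfies $\bar{P}_\tau\hat{R}^{-T}e_i=(-\hat{\lambda}_iI-A)^{-1}B_{\mathcal{T}}\tilde{r}_i$. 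Premultiplying by $C$ gives $C\bar{P}_\tau\hat{R}^{-T}e_i=G_{\mathcal{T}}(-\hat{\lambda}_i)\tilde{r}_i$, and the identical argument on the reduced equation (\ref{e3}) gives $\hat{C}\hat{P}_\tau\hat{R}^{-T}e_i=\hat{G}_{\mathcal{T}}(-\hat{\lambda}_i)\tilde{r}_i$. Subtracting, the interpolation residual equals $(C\bar{P}_\tau-\hat{C}\hat{P}_\tau)\hat{R}^{-T}e_i$; since $\hat{R}^{-T}$ is invertible and $i$ ranges over $1,\dots,r$, condition (\ref{b31}) holds for every $i$ if and only if (\ref{b3}) holds. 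Statement 3 is the transpose of this argument on the observability-type equations (\ref{e2}) and (\ref{e4}): multiply on the right by the eigenvector $\hat{R}e_i$ of $\hat{A}$, use $\hat{C}\hat{R}e_i=\hat{l}_i$ and $\hat{C}_\tau\hat{R}e_i=e^{\hat{\lambda}_i\tau}\hat{l}_i$ to collapse the right-hand side to $-C_{\mathcal{T}}^T\tilde{l}_i$, transpose, and postmultiply by $B$ and $\hat{B}$. The difference $\tilde{l}_i^TH_{\mathcal{T}}(-\hat{\lambda}_i)-\tilde{l}_i^T\hat{H}_{\mathcal{T}}(-\hat{\lambda}_i)$ then equals $e_i^T\hat{R}^T(\bar{Q}_\tau^TB-\hat{Q}_\tau\hat{B})$ after invoking the symmetry of the Gramian $\hat{Q}_\tau$, and invertibility of $\hat{R}^T$ delivers the equivalence of (\ref{b21}) and (\ref{b2}).

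For Statements 2 and 4 no Sylvester machinery is needed. Using $B_{\mathcal{T}}\tilde{r}_i=(B-e^{\hat{\lambda}_i\tau}B_\tau)\hat{r}_i$ with $B_\tau=e^{A\tau}B$ and the fact that $e^{A\tau}$ commutes with $(sI-A)^{-1}$, I would show directly that $G_{\mathcal{T}}(-\hat{\lambda}_i)\tilde{r}_i=\big(G(-\hat{\lambda}_i)+G_\tau(-\hat{\lambda}_i)\big)\hat{r}_i=T_\tau(-\hat{\lambda}_i)\hat{r}_i$, with the identical reduction for the hatted quantities; hence (\ref{b31}) and (\ref{g3}) are literally the same condition. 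The dual unfolding on the left, using $\tilde{l}_i^TC_{\mathcal{T}}=\hat{l}_i^T(C-e^{\hat{\lambda}_i\tau}C_\tau)$ and $C_\tau=Ce^{A\tau}$, gives $\tilde{l}_i^TH_{\mathcal{T}}(-\hat{\lambda}_i)=\hat{l}_i^TT_\tau(-\hat{\lambda}_i)$, so (\ref{b21}) and (\ref{g2}) coincide as well.

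The main obstacle throughout is pure bookkeeping: keeping the sign conventions in $B_{\mathcal{T}}=[\,B\;-B_\tau\,]$ and $C_{\mathcal{T}}=[\,C;\,-C_\tau\,]$ consistent with the $e^{\hat{\lambda}_i\tau}$ factors carried in $\tilde{r}_i$ and $\tilde{l}_i$, and applying the eigenvector normalizations that identify $\hat{l}_i$ and $\hat{r}_i$ on the correct side of each resolvent. Once these are pinned down the four claims follow mechanically, and it is worth noting that Statements 2 and 4 in fact collapse to identities rather than genuine equivalences, which is why the real content of the theorem lives in the Sylvester-based arguments for Statements 1 and 3.
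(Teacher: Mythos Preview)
Your proposal is correct and follows essentially the same route as the paper: for Statements 1 and 3 the paper right-multiplies the Sylvester equations (\ref{e1})--(\ref{e4}) by $\hat{R}^{-T}$ (resp.\ $\hat{R}$) to diagonalise the $\hat{A}$-block and read off the columns as resolvent expressions, exactly as you do with the eigenvectors $\hat{R}^{-T}e_i$ and $\hat{R}e_i$; for Statements 2 and 4 the paper likewise shows that the two interpolation conditions are literally the same identity by unfolding $G_{\mathcal{T}}$, $H_{\mathcal{T}}$ and the $e^{\hat{\lambda}_i\tau}$-weighted tangential directions. The only cosmetic difference is that in Statements 2 and 4 the paper routes the computation through the pole--residue expansion of $G(s)$ whereas you use the commutativity of $e^{A\tau}$ with $(sI-A)^{-1}$ directly; both collapse to $G_{\mathcal{T}}(-\hat{\lambda}_i)\tilde{r}_i=T_\tau(-\hat{\lambda}_i)\hat{r}_i$ in one line.
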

\begin{proof}
\begin{enumerate}
\item Let us define $\bar{\mathscr{P}}_\tau$ as $\bar{\mathscr{P}}_\tau=\bar{P}_\tau\hat{R}^{-T}$. By noting that $e^{\hat{A}\tau}=\hat{R}e^{\hat{\Lambda}\tau}\hat{R}^{-1}$ and multiplying $\hat{R}^{-T}$ from the right, the equation (\ref{e1}) becomes
\begin{align}
A\bar{\mathscr{P}}_\tau+\bar{\mathscr{P}}_\tau\hat{\Lambda}+B\tilde{B}^T-B_\tau \tilde{B}^Te^{\hat{\Lambda}\tau}=0.\nonumber
\end{align}
Since $\hat{\Lambda}$ (and resultantly $e^{\hat{\Lambda}\tau}$) is a diagonal matrix \citep{goyal2019time}, $\bar{\mathscr{P}}_\tau$ can be computed column-wise as
\begin{align}
\bar{\mathscr{P}}_{\tau,i}=(-\hat{\lambda}_iI-A)^{-1}B\hat{r}_i-(-\hat{\lambda}_iI-A)^{-1}B_\tau e^{\hat{\lambda}_i\tau}\hat{r}_i.\nonumber
\end{align}
Further, let us define $\hat{\mathscr{P}}_\tau$ as $\hat{\mathscr{P}}_\tau=\hat{P}_\tau\hat{R}^{-T}$. By multiplying $\hat{R}^{-T}$ from the right, the equation (\ref{e3}) becomes
\begin{align}
\hat{A}\hat{\mathscr{P}}_\tau+\hat{\mathscr{P}}_\tau\hat{\Lambda}+\hat{B}\tilde{B}^T-\hat{B}_\tau \tilde{B}^Te^{\hat{\Lambda}\tau}=0.\nonumber
\end{align}
Then $\hat{\mathscr{P}}_\tau$ can be computed column-wise as
\begin{align}
\hat{\mathscr{P}}_{\tau,i}=(-\hat{\lambda}_iI-\hat{A})^{-1}\hat{B}\hat{r_i}-(-\hat{\lambda}_iI-\hat{A})^{-1}\hat{B}_\tau e^{\hat{\lambda}_i\tau}\hat{r_i}.\nonumber
\end{align}
Now, the optimality condition (\ref{b3}) can be written as
\begin{align}
C\begin{bmatrix}\bar{\mathscr{P}}_{\tau,1}&\cdots&\bar{\mathscr{P}}_{\tau,r}\end{bmatrix}\hat{R}^T-\hat{C}\begin{bmatrix}\hat{\mathscr{P}}_{\tau,1}&\cdots&\hat{\mathscr{P}}_{\tau,r}\end{bmatrix}\hat{R}^T=0.\label{y1}
\end{align}
After multiplying $\hat{R}^{-T}$ from the right, each column of (\ref{y1}) becomes
\begin{align}
G_{\mathcal{T}}(-\hat{\lambda}_i)\tilde{r}_i-\hat{G}_{\mathcal{T}}(-\hat{\lambda}_i)\tilde{r}_i=0.\nonumber
\end{align}
\item Note that $B_\tau=Re^{\Lambda \tau}R^{-1}B$ and $\hat{B}_\tau=\hat{R}e^{\hat{\Lambda}\tau}\hat{R}^{-1}\hat{B}$. Thus $G_{\mathcal{T}}(s)$ and $\hat{G}_{\mathcal{T}}(s)$ can be represented as
\begin{align}
G_{\mathcal{T}}(s)=\begin{bmatrix}\sum_{i=1}^{n}\frac{l_ir_i^T}{s-\lambda_i}&-\sum_{i=1}^{n}\frac{l_ir_i^T}{s-\lambda_i}e^{\lambda_i \tau}\end{bmatrix}=\begin{bmatrix}G(s)&e^{s\tau}G_\tau(s)\end{bmatrix},\nonumber\\
\hat{G}_{\mathcal{T}}(s)=\begin{bmatrix}\sum_{i=1}^{r}\frac{\hat{l}_i\hat{r}_i^T}{s-\hat{\lambda}_i}&-\sum_{i=1}^{r}\frac{\hat{l}_i\hat{r}_i^T}{s-\hat{\lambda}_i}e^{\hat{\lambda}_i\tau}\end{bmatrix}=\begin{bmatrix}\hat{G}(s)&e^{s\tau}\hat{G}_\tau(s)\end{bmatrix}.\nonumber
\end{align}
Then it can readily be noted that $G_{\mathcal{T}}(-\hat{\lambda}_i)\tilde{r}_i=T_\tau(-\hat{\lambda}_i)\hat{r}_i$ and $\hat{G}_{\mathcal{T}}(-\hat{\lambda}_i)\tilde{r}_i=\hat{T}_\tau(-\hat{\lambda}_i)\hat{r}_i$.
\item Let us define $\bar{\mathscr{Q}}_\tau$ as $\bar{\mathscr{Q}}_\tau=\bar{Q}_\tau\hat{R}$. By multiplying $\hat{R}$ from the right, the equation (\ref{e2}) becomes
\begin{align}
A^T\bar{\mathscr{Q}}_\tau+\bar{\mathscr{Q}}_\tau\hat{\Lambda}+C^T\tilde{C}-C_\tau^T \tilde{C}e^{\hat{\Lambda}\tau}=0.\nonumber
\end{align}
Then $\bar{\mathscr{Q}}_\tau$  can be computed column-wise as
\begin{align}
\bar{\mathscr{Q}}_{\tau,i}=(-\hat{\lambda}_iI-A^T)^{-1}C^T\hat{l}_i-(-\hat{\lambda}_iI-A^T)^{-1}C_\tau^Te^{\hat{\lambda}_i\tau}\hat{l}_i.\nonumber
\end{align}
Now define $\hat{\mathscr{Q}}_\tau$ as $\hat{\mathscr{Q}}_\tau=\hat{Q}_\tau\hat{R}$. By multiplying $\hat{R}$ from the right, the equation (\ref{e4}) becomes
\begin{align}
\hat{A}^T\hat{\mathscr{Q}}_\tau+\hat{\mathscr{Q}}_\tau\hat{\Lambda}+\hat{C}^T\tilde{C}-\hat{C}_\tau^T \tilde{C}e^{\hat{\Lambda}\tau}=0.\nonumber
\end{align}
Then $\hat{\mathscr{Q}}_\tau$  can be computed column-wise as
\begin{align}
\hat{\mathscr{Q}}_{\tau,i}=(-\hat{\lambda}_iI-\hat{A}^T)^{-1}\hat{C}^T\hat{l}_i-(-\hat{\lambda}_iI-\hat{A}^T)^{-1}\hat{C}_\tau^Te^{\hat{\lambda}_i\tau}\hat{l}_i.\nonumber
\end{align}
Further, the optimality condition (\ref{b2}) can be written as
\begin{align}
B^T\begin{bmatrix}\bar{\mathscr{Q}}_{\tau,1}&\cdots&\bar{\mathscr{Q}}_{\tau,r}\end{bmatrix}\hat{R}-\hat{B}^T\begin{bmatrix}\hat{\mathscr{Q}}_{\tau,1}&\cdots&\hat{\mathscr{Q}}_{\tau,r}\end{bmatrix}\hat{R}=0.\label{y2}
\end{align}
After multiplying $\hat{R}^{-1}$ from the right, each column of (\ref{y2}) becomes
\begin{align}
H_{\mathcal{T}}^T(-\hat{\lambda}_i)\tilde{l}_i-\hat{H}_{\mathcal{T}}^T(-\hat{\lambda}_i)\tilde{l}_i=0.\nonumber
\end{align}
\item Note that $C_\tau=CRe^{\Lambda \tau}R^{-1}$ and $\hat{C}_\tau=\hat{C}\hat{R}e^{\hat{\Lambda}\tau}\hat{R}^{-1}$. Thus $H_{\mathcal{T}}(s)$ and $\hat{H}_{\mathcal{T}}(s)$ can be represented as
\begin{align}
H_{\mathcal{T}}(s)&=\begin{bmatrix}\sum_{i=1}^{n}\frac{l_ir_i^T}{s-\lambda_i}\\\sum_{i=1}^{n}\frac{l_ir_i^T}{s-\lambda_i}e^{\lambda_i\tau}\end{bmatrix}=\begin{bmatrix}G(s)\\e^{s\tau}G_\tau(s)\end{bmatrix},\nonumber\\
\hat{H}_{\mathcal{T}}(s)&=\begin{bmatrix}\sum_{i=1}^{r}\frac{\hat{l}_i\hat{r}_i^T}{s-\hat{\lambda}_i}\\\sum_{i=1}^{r}\frac{\hat{l}_i\hat{r}_i^T}{s-\hat{\lambda}_i}e^{\hat{\lambda}_i\tau}\end{bmatrix}=\begin{bmatrix}\hat{G}(s)\\e^{s\tau}\hat{G}_\tau(s)\end{bmatrix}.\nonumber
\end{align}
Then it can readily be noted that $\tilde{l}_i^TH_{\mathcal{T}}(-\hat{\lambda}_i)=\hat{l}_i^TT_\tau(-\hat{\lambda}_i)$ and $\tilde{l}_i^T\hat{H}_{\mathcal{T}}(-\hat{\lambda}_i)=\hat{l}_i^T\hat{T}_\tau(-\hat{\lambda}_i)$.
\end{enumerate}
\end{proof}
\section{$\mathcal{H}_{2,\omega}$- and $\mathcal{H}_{2,\tau}$ near-optimal MOR}
Owing to the difficulty in achieving a local optimum within the oblique projection framework, we focus on achieving near-optimal ROMs for the $\mathcal{H}_{2,\omega}$- and $\mathcal{H}_{2,\tau}$-optimal MOR problems. To that end, we propose two frameworks: interpolation-based framework and stationary point iteration framework. The interpolation-based framework does not satisfy any optimality condition exactly, in general. However, deviations in satisfaction of the optimality conditions decay as the order of ROM grows. The stationary point iteration framework, on the other hand, satisfies two out of three optimality conditions upon convergence, and deviation in satisfaction of the third optimality condition decays as the order of ROM grows.
\subsection{Interpolation framework}
The interpolation theory has been significantly advanced in the last two decades \citep{grimme1997krylov,astolfi2020model}. There are several computational efficient numerical methods to enforce interpolation conditions \citep{beattie2017model}, which has motivated the results of this subsection. We propose iterative tangential interpolation algorithms that target a subset of respective optimality conditions for the $\mathcal{H}_{2,\omega}$- and $\mathcal{H}_{2,\tau}$-optimal MOR problems. The proposed algorithms nearly satisfy their respective optimality conditions upon convergence, and as the order of ROM grows, deviations in satisfaction of the optimality conditions decay further.

We have already seen that the following tangential interpolation conditions are equivalent
\begin{align}
\hat{l}_i^TT_\omega(-\hat{\lambda}_i)&=\hat{l}_i^T\hat{T}_\omega(-\hat{\lambda}_i)&\Longleftrightarrow&&\bar{l}_i^TH_\Omega(-\hat{\lambda}_i)&=\bar{l}_i^T\hat{H}_\Omega(-\hat{\lambda}_i),\nonumber\\
T_\omega(-\hat{\lambda}_i)\hat{r}_i&=\hat{T}_\omega(-\hat{\lambda}_i)\hat{r}_i&\Longleftrightarrow && G_\Omega(-\hat{\lambda}_i)\bar{r}_i&=\hat{G}_\Omega(-\hat{\lambda}_i)\bar{r}_i\nonumber
\end{align} when $G(s)$ and $\hat{G}(s)$ have simple poles. To satisfy these conditions, we need an interpolation framework that preserves the structure of (state-space realizations of) $H_\Omega(s)$ and $G_\Omega(s)$ in $\hat{H}_\Omega(s)$ and $\hat{G}_\Omega(s)$, respectively. The triplet $(-\hat{\lambda}_i,\hat{r}_i,\hat{l}_i)$ is not known apriori, and thus finding such a ROM is a nonconvex problem. We propose an iterative algorithm, wherein starting with an arbitrary guess, the triplet $(-\hat{\lambda}_i,\hat{r}_i,\hat{l}_i)$ is updated in each iteration until convergence. The proposed algorithm is referred to as the ``Frequency-limited iterative tangential interpolation algorithm (FLITIA). The pseudo-code of FLITIA is given in Algorithm \ref{Alg1}. Steps \ref{s1}-\ref{s4} are standard interpolation results (already discussed in Subsection \ref{sub3.2}). Steps \ref{s5}-\ref{s9} are the bi-orthogonal Gram-Schmidt method to ensure the oblique projection condition $\hat{W}^T\hat{V}=I$; cf. \cite{bennersparse}. The ROM $\hat{G}(s)$ is fetched from the ROMs of $G_\Omega(s)$ and $H_\Omega(s)$ in Step \ref{s10}. The interpolation data is updated in Steps \ref{s11}-\ref{s13}. In general, $\tilde{B}_\Omega$ and $\tilde{C}_\Omega$ are not equal to $\hat{B}_\Omega$ and $\hat{C}_\Omega$, respectively, because $\hat{W}^TF_\omega[A]B\neq F_\omega[\hat{A}]\hat{B}$ and $CF_\omega[A]\hat{V}\neq\hat{C}F_\omega[\hat{A}]$. However, as the order of ROM grows, $\hat{V}F_\omega[\hat{A}]\hat{W}^T\approx F_\omega[A]$ because $\hat{V}F_\omega[\hat{A}]\hat{W}^T$ is the oblique projection-based approximation of $F_\omega[A]$. Thus $\tilde{B}_\Omega$ and $\tilde{C}_\Omega$ become nearly equal to $\hat{B}_\Omega$ and $\hat{C}_\Omega$, respectively. Finally, upon convergence, the interpolation conditions (\ref{a21}) and (\ref{a31}) are nearly satisfied.
\begin{algorithm}[!h]
\textbf{Input:} Original system: $(A,B,C)$; Desired frequency interval: $[-\omega,\omega]$ rad/sec; Initial guess: $(\hat{\sigma}_i,\hat{b}_i,\hat{c}_i)$.\\
\textbf{Output:} ROM $(\hat{A},\hat{B},\hat{C})$.
  \begin{algorithmic}[1]
     \STATE Set $B_\Omega=\begin{bmatrix}B&F_\omega[A]B\end{bmatrix}$ and $C_\Omega=\begin{bmatrix}C\\CF_\omega[A]\end{bmatrix}$.\label{s1}
      \STATE \textbf{while} (not converged) \textbf{do}
      \STATE $\bar{b}_i=\begin{bmatrix}F_\omega[-\hat{\sigma}_i]\hat{b}_i\\\hat{b}_i\end{bmatrix}$ and $\bar{c}_i=\begin{bmatrix}F_\omega[-\hat{\sigma}_i]\hat{c}_i&\hat{c}_i\end{bmatrix}$.
      \STATE Compute $\hat{V}$ and $\hat{W}$ from the equations (\ref{e28n}) and (\ref{e29n}), respectively.\label{s4}
      \STATE \textbf{for} $i=1,\ldots,r$ \textbf{do}\label{s5}
      \STATE $\hat{v}=\hat{V}(:,i)$, $\hat{v}=\prod_{k=1}^{i}\big(I-\hat{V}(:,k)\hat{W}(:,k)^T\big)\hat{v}$.
      \STATE $\hat{w}=\hat{W}(:,i)$, $\hat{w}=\prod_{k=1}^{i}\big(I-\hat{W}(:,k)\hat{V}(:,k)^T\big)\hat{w}$.
      \STATE $\hat{v}=\frac{\hat{v}}{||\hat{v}||_2}$, $\hat{w}=\frac{\hat{w}}{||\hat{w}||_2}$, $\hat{v}=\frac{\hat{v}}{\hat{w}^T\hat{v}}$, $\hat{V}(:,i)=\hat{v}$, $\hat{W}(:,i)=\hat{w}$.
      \STATE \textbf{end for}\label{s9}
      \STATE $\hat{A}=\hat{W}^TA\hat{V}$, $\tilde{B}_\Omega=\hat{W}^TB_\Omega=\begin{bmatrix}\hat{B}&\hat{W}^TF_\omega[A]B\end{bmatrix}$, $\tilde{C}_\Omega=C_\Omega\hat{V}=\begin{bmatrix}\hat{C}\\CF_\omega[A]\hat{V}\end{bmatrix}$.\label{s10}
      \STATE Compute spectral factorization of $\hat{A}=\hat{R}\hat{\Lambda}\hat{R}^{-1}$ where $\hat{\Lambda}=diag(\hat{\lambda}_1,\cdots,\hat{\lambda}_r)$.\label{s11}
      \STATE Set $\begin{bmatrix}\hat{r}_1&\cdots&\hat{r}_r\end{bmatrix}=\hat{B}^T\hat{R}^{-*}$ and $\begin{bmatrix}\hat{l}_1&\cdots&\hat{l}_r\end{bmatrix}=\hat{C}\hat{R}$.
      \STATE Update $\hat{\sigma}_i=-\hat{\lambda}_i$, $\hat{b}_i=\hat{r}_i$, and $\hat{c}_i^T=\hat{l}_i$.\label{s13}
      \STATE \textbf{end while}
  \end{algorithmic}
  \caption{FLITIA}\label{Alg1}
\end{algorithm}

Similarly, we have seen that the following tangential interpolation conditions are equivalent
\begin{align}
\hat{l}_i^TT_{\tau}(-\hat{\lambda}_i)&=\hat{l}_i^T\hat{T}_\tau(-\hat{\lambda}_i)&\Longleftrightarrow&&\tilde{l}_i^TH_{\mathcal{T}}(-\hat{\lambda}_i)&=\tilde{l}_i^T\hat{H}_{\mathcal{T}}(-\hat{\lambda}_i),\nonumber\\
T_\tau(-\hat{\lambda}_i)\hat{r}_i&=\hat{T}_\tau(-\hat{\lambda}_i)\hat{r}_i&\Longleftrightarrow && G_{\mathcal{T}}(-\hat{\lambda}_i)\tilde{r}_i&=\hat{G}_{\mathcal{T}}(-\hat{\lambda}_i)\tilde{r}_i.\nonumber
\end{align} when $G(s)$ and $\hat{G}(s)$ have simple poles. To satisfy these conditions, we need an interpolation framework that preserves the structure of (state-space realizations of) $H_{\mathcal{T}}(s)$ and $G_{\mathcal{T}}(s)$ in $\hat{H}_{\mathcal{T}}(s)$ and $\hat{G}_{\mathcal{T}}(s)$, respectively. We propose an iterative algorithm, wherein starting with an arbitrary guess, the triplet $(-\hat{\lambda}_i,\hat{r}_i,\hat{l}_i)$ is updated in each iteration until convergence. The proposed algorithm is referred to as the ``Time-limited iterative tangential interpolation algorithm (TLITIA). The pseudo-code of TLITIA is given in Algorithm \ref{Alg2}. Steps \ref{2as1}-\ref{2as4} are standard interpolation results (already discussed in Subsection \ref{sub3.4}). Steps \ref{2as5}-\ref{2as9} are the bi-orthogonal Gram-Schmidt method to ensure the oblique projection condition $\hat{W}^T\hat{V}=I$. The ROM $\hat{G}(s)$ is fetched from the ROMs of $G_{\mathcal{T}}(s)$ and $H_{\mathcal{T}}(s)$ in Step \ref{2as10}. The interpolation data is updated in Steps \ref{2as11}-\ref{2as13}. In general, $\tilde{B}_{\mathcal{T}}$ and $\tilde{C}_{\mathcal{T}}$ are not equal to $\hat{B}_{\mathcal{T}}$ and $\hat{C}_{\mathcal{T}}$, respectively, because $\hat{W}^Te^{A\tau}B\neq e^{\hat{A}\tau}\hat{B}$ and $Ce^{A\tau}\hat{V}\neq\hat{C}e^{\hat{A}\tau}$. Thus the structures of $B_{\mathcal{T}}$ and $C_{\mathcal{T}}$ are not preserved in $\hat{W}^TB_{\mathcal{T}}$ and $C_{\mathcal{T}}\hat{V}$, respectively. However, as the order of ROM grows, $\hat{V}e^{\hat{A}t}\hat{W}^T\approx e^{At}$ because $\hat{V}e^{\hat{A}t}\hat{W}^T$ is the oblique projection-based approximation of $e^{At}$. Thus $\tilde{B}_{\mathcal{T}}$ and $\tilde{C}_{\mathcal{T}}$ become nearly equal to $\hat{B}_{\mathcal{T}}$ and $\hat{C}_{\mathcal{T}}$, respectively. Finally, upon convergence, the interpolation conditions (\ref{b21}) and (\ref{b31}) are nearly satisfied.
\begin{algorithm}[!h]
\textbf{Input:} Original system: $(A,B,C)$; Desired time interval: $[0,\tau]$ sec; Initial guess: $(\hat{\sigma}_i,\hat{b}_i,\hat{c}_i)$.\\
\textbf{Output:} ROM $(\hat{A},\hat{B},\hat{C})$.
  \begin{algorithmic}[1]
     \STATE Set $B_{\mathcal{T}}=\begin{bmatrix}B&-e^{A\tau}B\end{bmatrix}$ and $C_{\mathcal{T}}=\begin{bmatrix}C\\-Ce^{A\tau}\end{bmatrix}$.\label{2as1}
      \STATE \textbf{while} (not converged) \textbf{do}
      \STATE $\bar{b}_i=\begin{bmatrix}\hat{b}_i\\e^{-\sigma_i\tau}\hat{b}_i\end{bmatrix}$ and $\bar{c}_i=\begin{bmatrix}\hat{c}_i&e^{-\sigma_i\tau}\hat{c}_i\end{bmatrix}$.
      \STATE Compute $\hat{V}$ and $\hat{W}$ from the equations (\ref{eq32n}) and (\ref{eq33n}), respectively.\label{2as4}
      \STATE \textbf{for} $i=1,\ldots,r$ \textbf{do}\label{2as5}
      \STATE $\hat{v}=\hat{V}(:,i)$, $\hat{v}=\prod_{k=1}^{i}\big(I-\hat{V}(:,k)\hat{W}(:,k)^T\big)\hat{v}$.
      \STATE $\hat{w}=\hat{W}(:,i)$, $\hat{w}=\prod_{k=1}^{i}\big(I-\hat{W}(:,k)\hat{V}(:,k)^T\big)\hat{w}$.
      \STATE $\hat{v}=\frac{\hat{v}}{||\hat{v}||_2}$, $\hat{w}=\frac{\hat{w}}{||\hat{w}||_2}$, $\hat{v}=\frac{\hat{v}}{\hat{w}^T\hat{v}}$, $\hat{V}(:,i)=\hat{v}$, $\hat{W}(:,i)=\hat{w}$.
      \STATE \textbf{end for}\label{2as9}
      \STATE $\hat{A}=\hat{W}^TA\hat{V}$, $\tilde{B}_{\mathcal{T}}=\hat{W}^TB_{\mathcal{T}}=\begin{bmatrix}\hat{B}&-\hat{W}^Te^{A\tau}B\end{bmatrix}$, $\tilde{C}_{\mathcal{T}}=C_{\mathcal{T}}\hat{V}=\begin{bmatrix}\hat{C}\\-Ce^{A\tau}\hat{V}\end{bmatrix}$.\label{2as10}
      \STATE Compute spectral factorization of $\hat{A}=\hat{R}\hat{\Lambda}\hat{R}^{-1}$ where $\hat{\Lambda}=diag(\hat{\lambda}_1,\cdots,\hat{\lambda}_r)$.\label{2as11}
      \STATE Set $\begin{bmatrix}\hat{r}_1&\cdots&\hat{r}_r\end{bmatrix}=\hat{B}^T\hat{R}^{-*}$ and $\begin{bmatrix}\hat{l}_1&\cdots&\hat{l}_r\end{bmatrix}=\hat{C}\hat{R}$.
      \STATE Update $\hat{\sigma}_i=-\hat{\lambda}_i$, $\hat{b}_i=\hat{r}_i$, and $\hat{c}_i^T=\hat{l}_i$.\label{2as13}
      \STATE \textbf{end while}
  \end{algorithmic}
  \caption{TLITIA}\label{Alg2}
\end{algorithm}
\subsection{Stationary point iteration framework}
Owing to the inherent difficulty in making an optimal choice of $\hat{A}$ within the oblique projection framework, we now focus on constructing a ROM $(\hat{A},\hat{B},\hat{C})$ wherein $\hat{B}$ and $C$ are the optimal choices. Let the ROM be obtained by using the oblique projection $\Pi=\bar{P}_\omega\hat{P}_{\omega}^{-1}\hat{Q}_\omega^{-1}\bar{Q}_\omega^T$ (with $\hat{V}=\bar{P}_\omega\hat{P}_{\omega}^{-1}$ and $\hat{W}=\bar{Q}_\omega\hat{Q}_\omega^{-1}$). Since $\bar{P}_\omega$, $\hat{P}_\omega$, $\bar{Q}_\omega$, and $\hat{Q}_\omega$ depend on the ROM $(\hat{A},\hat{B},\hat{C})$, finding such a projection is a nonconvex problem. Note that (\ref{1}) and the equations (\ref{d1})-(\ref{d4}) can be seen as the following coupled system of equations
\begin{align}
(\hat{A},\hat{B},\hat{C})&=f_\omega(\bar{P}_\omega,\hat{P}_\omega,\bar{Q}_\omega,\hat{Q}_\omega),&&&(\bar{P}_\omega,\hat{P}_\omega,\bar{Q}_\omega,\hat{Q}_\omega)&=g_\omega(\hat{A},\hat{B},\hat{C}).\nonumber
\end{align}
The stationary points of $(\hat{A},\hat{B},\hat{C})=f_\omega\big(g_\omega(\hat{A},\hat{B},\hat{C})\big)$ satisfy the optimality conditions (\ref{a2}) and (\ref{a3}). The pseudo-code of the stationary point iteration algorithm to compute stationary points of $(\hat{A},\hat{B},\hat{C})=f_\omega\big(g_\omega(\hat{A},\hat{B},\hat{C})\big)$ is given in Algorithm \ref{Alg3}, which is referred to as the ``Frequency-limited $\mathcal{H}_2$-suboptimal MOR (FLHMOR)". The oblique projection condition $\hat{W}^T\hat{V}=I$ is ensured by using the biorthogonal Gram-Schmidt method, i.e., steps \ref{sta6}-\ref{sta11}.
\begin{algorithm}[!h]
\textbf{Input:} Original system: $(A,B,C)$; Desired frequency interval: $[-\omega,\omega]$ rad/sec; Initial guess: $(\hat{A},\hat{B},\hat{C})$.\\
\textbf{Output:} ROM $(\hat{A},\hat{B},\hat{C})$.
  \begin{algorithmic}[1]
      \STATE \textbf{while} (not converged) \textbf{do}
      \STATE Solve the equations (\ref{d1}) and (\ref{d2}) to compute $\bar{P}_{\omega}$ and $\bar{Q}_\omega$, respectively.
      \STATE Solve the equations (\ref{d3}) and (\ref{d4}) to compute $\hat{P}_{\omega}$ and $\hat{Q}_\omega$, respectively.
      \STATE Set $\hat{V}=\bar{P}_{\omega}\hat{P}_{\omega}^{-1}$ and $\hat{W}=\bar{Q}_{\omega}\hat{Q}_{\omega}^{-1}$.
      \STATE \textbf{for} $i=1,\ldots,r$ \textbf{do}\label{sta6}
      \STATE $\hat{v}=\hat{V}(:,i)$, $\hat{v}=\prod_{k=1}^{i}\big(I-\hat{V}(:,k)\hat{W}(:,k)^T\big)\hat{v}$.
      \STATE $\hat{w}=\hat{W}(:,i)$, $\hat{w}=\prod_{k=1}^{i}\big(I-\hat{W}(:,k)\hat{V}(:,k)^T\big)\hat{w}$.
      \STATE $\hat{v}=\frac{\hat{v}}{||\hat{v}||_2}$, $\hat{w}=\frac{\hat{w}}{||\hat{w}||_2}$, $\hat{v}=\frac{\hat{v}}{\hat{w}^T\hat{v}}$, $\hat{V}(:,i)=\hat{v}$, $\hat{W}(:,i)=\hat{w}$.
      \STATE \textbf{end for}\label{sta11}
      \STATE $\hat{A}=\hat{W}^TA\hat{V}$, $\hat{B}=\hat{W}^TB$, $\hat{C}=C\hat{V}$.
      \STATE \textbf{end while}
  \end{algorithmic}
  \caption{FLHMOR}\label{Alg3}
\end{algorithm}

Similarly, if the ROM is obtained by using the oblique projections $\Pi=\bar{P}_\tau\hat{P}_{\tau}^{-1}\hat{Q}_\tau^{-1}\bar{Q}_\tau^T$ (with $\hat{V}=\bar{P}_\tau\hat{P}_{\tau}^{-1}$ and $\hat{W}=\bar{Q}_\tau\hat{Q}_\tau^{-1}$), (\ref{1}) and the equations (\ref{e1})-(\ref{e4}) can be seen as the following coupled system of equations
\begin{align}
(\hat{A},\hat{B},\hat{C})&=f_\tau(\bar{P}_\tau,\hat{P}_\tau,\bar{Q}_\tau,\hat{Q}_\tau),&&&(\bar{P}_\tau,\hat{P}_\tau,\bar{Q}_\tau,\hat{Q}_\tau)&=g_\tau(\hat{A},\hat{B},\hat{C}).\nonumber
\end{align}
The stationary points of $(\hat{A},\hat{B},\hat{C})=f_\tau\big(g_\tau(\hat{A},\hat{B},\hat{C})\big)$ satisfy the optimality conditions (\ref{b2}) and (\ref{b3}). The pseudo-code of the stationary point iteration algorithm to compute stationary points of $(\hat{A},\hat{B},\hat{C})=f_\tau\big(g_\tau(\hat{A},\hat{B},\hat{C})\big)$ is given in Algorithm \ref{Alg4}, which is referred to as the ``Time-limited $\mathcal{H}_2$-suboptimal MOR (TLHMOR)". The oblique projection condition $\hat{W}^T\hat{V}=I$ is again ensured by using the biorthogonal Gram-Schmidt method.
\begin{algorithm}[!h]
\textbf{Input:} Original system: $(A,B,C)$; Desired time interval: $[0,\tau]$ sec; Initial guess: $(\hat{A},\hat{B},\hat{C})$.\\
\textbf{Output:} ROM $(\hat{A},\hat{B},\hat{C})$.
  \begin{algorithmic}[1]
      \STATE \textbf{while} (not converged) \textbf{do}
      \STATE Solve the equations (\ref{e1}) and (\ref{e2}) to compute $\bar{P}_{\tau}$ and $\bar{Q}_\tau$, respectively.
      \STATE Solve the equations (\ref{e3}) and (\ref{e4}) to compute $\hat{P}_{\tau}$ and $\hat{Q}_\tau$, respectively.
      \STATE Set $\hat{V}=\bar{P}_{\tau}\hat{P}_{\tau}^{-1}$ and $\hat{W}=\bar{Q}_{\tau}\hat{Q}_{\tau}^{-1}$.
      \STATE \textbf{for} $i=1,\ldots,r$ \textbf{do}
      \STATE $\hat{v}=\hat{V}(:,i)$, $\hat{v}=\prod_{k=1}^{i}\big(I-\hat{V}(:,k)\hat{W}(:,k)^T\big)\hat{v}$.
      \STATE $\hat{w}=\hat{W}(:,i)$, $\hat{w}=\prod_{k=1}^{i}\big(I-\hat{W}(:,k)\hat{V}(:,k)^T\big)\hat{w}$.
      \STATE $\hat{v}=\frac{\hat{v}}{||\hat{v}||_2}$, $\hat{w}=\frac{\hat{w}}{||\hat{w}||_2}$, $\hat{v}=\frac{\hat{v}}{\hat{w}^T\hat{v}}$, $\hat{V}(:,i)=\hat{v}$, $\hat{W}(:,i)=\hat{w}$.
      \STATE \textbf{end for}
      \STATE $\hat{A}=\hat{W}^TA\hat{V}$, $\hat{B}=\hat{W}^TB$, $\hat{C}=C\hat{V}$.
      \STATE \textbf{end while}
  \end{algorithmic}
  \caption{TLHMOR}\label{Alg4}
\end{algorithm}
\section{Computational aspects}
In this section, we briefly discuss some important computational aspects to be considered for efficient implementation of the proposed algorithms.

\textit{Generic frequency and time intervals:} Throughout the paper, the desired frequency and time intervals are assumed for simplicity to be $\Omega=[-\omega,\omega]$ rad/sec and $\mathcal{T}=[0,\tau]$ sec, respectively. With some modifications, the proposed algorithms can be generalized for any generic frequency and time interval, i.e., $\Omega=[-\omega_2,-\omega_1]\cup[\omega_1,\omega_2]$ rad/sec and $\mathcal{T}=[\tau_1,\tau_2]$ sec, respectively. For the generic case, $F_\omega[A]$ and $F_\omega[-\hat{\sigma}_i]$ in FLITIA and FLHMOR are computed as the following
\begin{align}
F_\omega[A]&=F_{\omega_2}[A]-F_{\omega_1}[A]&&\textnormal{and}&F_\omega[A]&=F_{\omega_2}[-\hat{\sigma}_i]-F_{\omega_1}[-\hat{\sigma}_i].\nonumber
\end{align}
For the generic case, $B_{\mathcal{T}}$, $C_{\mathcal{T}}$, $\bar{b}_i$, and $\bar{c}_i$ in TLITIA are computed as
\begin{align}
B_{\mathcal{T}}&=\begin{bmatrix}e^{At_1}B&-e^{At_2}B\end{bmatrix},&&& C_{\mathcal{T}}&=\begin{bmatrix}Ce^{At_1}\\-Ce^{At_2}\end{bmatrix},\nonumber\\
\bar{b}_i&=\begin{bmatrix}e^{-\sigma_it_1}\hat{b}_i\\e^{-\sigma_it_2}\hat{b}_i\end{bmatrix},&&&\bar{c}_i&=\begin{bmatrix}e^{-\sigma_it_1}\hat{c}_i&e^{-\sigma_it_2}\hat{c}_i\end{bmatrix}.\nonumber
\end{align}
On similar lines, for the generic case, $\bar{P}_\tau$, $\hat{P}_\tau$, $\bar{Q}_\tau$, and $\hat{Q}_\tau$ in TLHMOR are computed as the following
\begin{align}
A\bar{P}_\tau+\bar{P}_\tau\hat{A}^T+e^{At_1}B\hat{B}^Te^{\hat{A}^Tt_1}-e^{At_2}B\hat{B}^Te^{\hat{A}^Tt_2}&=0,\nonumber\\
\hat{A}\hat{P}_\tau+\hat{P}_\tau\hat{A}^T+e^{\hat{A}t_1}\hat{B}\hat{B}^Te^{\hat{A}^Tt_1}-e^{\hat{A}t_2}\hat{B}\hat{B}^Te^{\hat{A}^Tt_2}&=0,\nonumber\\
A^T\bar{Q}_\tau+\bar{Q}_\tau\hat{A}+e^{A^Tt_1}C^T\hat{C}e^{\hat{A}t_1}-e^{A^Tt_2}C^T\hat{C}e^{\hat{A}t_2}&=0,\nonumber\\
\hat{A}^T\hat{Q}_\tau+\hat{Q}_\tau\hat{A}+e^{\hat{A}^Tt_1}\hat{C}^T\hat{C}e^{\hat{A}t_1}-e^{\hat{A}^Tt_2}\hat{C}^T\hat{C}e^{\hat{A}t_2}&=0.\nonumber
\end{align}

\textit{Computation of $F_\omega[A]$ and $e^{A\tau}$:} As the order of the original model becomes high, the computation of $F_\omega[A]$ and $e^{A\tau}$ becomes expensive. In this scenario, $F_\omega[A]$ and $e^{A\tau}$ can be approximated as $\hat{V}F_\omega[\hat{A}]\hat{W}^T$ and $\hat{V}e^{\hat{A}\tau}\hat{W}^T$, respectively. The Krylov subspace-based methods to compute $\hat{V}$ and $\hat{W}$ for this purpose can be found in \citep{benner2016frequency,kurschner2018balanced}.

\textit{Computation of $\bar{P}_\omega$, $\bar{Q}_\omega$, $\bar{P}_\tau$, and $\bar{Q}_\tau$:} In general, $p\ll n$, $m\ll n$, and the matrices $A$, $B$, and $C$ are sparse in a large-scale setting. This makes the equations (\ref{d1}), (\ref{d2}), (\ref{e1}), and (\ref{e2}) a special type of Sylvester equation known as ``\textit{sparse-dense}" Sylvester equation in the literature, cf. \citep{bennersparse}, wherein the large-scale matrices are small, and the small-scale matrices are dense. An efficient solver for this class of Sylvester equation that can be used to compute $\bar{P}_\omega$, $\bar{Q}_\omega$, $\bar{P}_\tau$, and $\bar{Q}_\tau$ efficiently in a large-scale setting is proposed in \citep{bennersparse}.

\textit{Approximations of $P_\omega$, $Q_\omega$, $P_\tau$, $Q_\tau$:} If either the condition (\ref{a2}) or (\ref{a3}) is met, the following holds
\begin{align}
||E(s)||_{\mathcal{H}_{2,\omega}}^2=tr\big(C(\bar{P}_\omega-\hat{V}\hat{P}_\omega\hat{V}^T)C^T\big)=tr\big(B^T(\bar{Q}_\omega-\hat{W}\hat{Q}_\omega\hat{W}^T)B\big).\nonumber
\end{align}
Thus, it can readily be concluded that FLITIA and FLHMOR provide approximations of $P_\omega$ and $Q_\omega$ as $\tilde{P}_\omega=\hat{V}\hat{P}_\omega\hat{V}^T$ and $\tilde{Q}_\omega=\hat{W}\hat{Q}_\omega\hat{W}^T$, respectively. $\tilde{P}_\omega$ and $\tilde{Q}_\omega$ can be used to perform approximate frequency-limited balanced truncation (FLBT) \citep{gawronski1990model} without solving the large-scale Lyapunov equations (\ref{u1}) and (\ref{u2}). Therefore, FLITIA and FLHMOR can be used to reduce the computational cost of FLBT.

Similarly, if either the condition (\ref{b2}) or (\ref{b3}) is met, the following holds
\begin{align}
||E(s)||_{\mathcal{H}_{2,\tau}}^2=tr\big(C(\bar{P}_\tau-\hat{V}\hat{P}_\tau\hat{V}^T)C^T\big)=tr\big(B^T(\bar{Q}_\tau-\hat{W}\hat{Q}_\tau\hat{W}^T)B\big).\nonumber
\end{align}
Thus, it can readily be concluded that TLITIA and TLHMOR provide approximations of $P_\tau$ and $Q_\tau$ as $\tilde{P}_\tau=\hat{V}\hat{P}_\tau\hat{V}^T$ and $\tilde{Q}_\tau=\hat{W}\hat{Q}_\tau\hat{W}^T$, respectively. $\tilde{P}_\tau$ and $\tilde{Q}_\tau$ can be used to perform approximate time-limited balanced truncation (TLBT) \citep{gawronski1990model} without solving the large-scale Lyapunov equations (\ref{v1}) and (\ref{v2}). Therefore, TLITIA and TLHMOR can be used to reduce the computational cost of TLBT.
\section{Numerical results}
In this section, the proposed algorithms are tested on four test models. The first model is a small-order illustrative system, which is considered to aid convenient validation of the theoretical results presented in the paper. The other three models are high-order systems taken from the collection of benchmark systems for MOR presented in \citep{chahlaoui2005benchmark}.
\subsection{Experimental setup and hardware}
The selection of desired frequency and time intervals is made arbitrarily for demonstration purposes. The initial guess of the ROM in FLHMOR and TLHMOR is also made arbitrarily. The mirror images of the poles and residues of the respective initial guess are used as interpolation points and tangential directions, respectively, in FLITIA and TLITIA. The ROMs are constructed by using FLITIA, TLITIA, FLHMOR, and TLHMOR. The accuracy of the ROMs is compared with that of FLBT and TLBT, as these are considered gold standards of their respective problems. Although FLTSIA and TLIRKA are heuristic in nature and based on experimental results, they are very similar in essence to FLITIA and TLITIA, respectively. Their performance is also quite similar and often indistinguishable. Therefore, we have not shown the results of FLTSIA and TLIRKA in our simulations for clarity. The Lyapunov and Sylvester equations are solved using MATLAB's \textit{``lyap"} command. The experiments are performed using MATLAB 2016 on a laptop with a 2GHz-i7 Intel processor and 16GB of memory.
\subsection{Illustrative example}
Consider a sixth-order model with the following state-space realization
\begin{align}
A &=\begin{bsmallmatrix}-1.7682&0.4541&0.5078&-0.1395&-0.1218&0.7166\\
   -0.2639&-2.4685&-0.6461&1.3914&0.1109&0.3260\\
    0.1337&-1.1676&-2.2543&0.0382&0.0584&0.4749\\
    0.3569&0.2430&-0.9208&-2.3415&0.2474&-1.1362\\
    0.0093&0.4753&0.1530&-0.2144&-2.1098&0.1888\\
    0.8927&-0.8289&-0.2549&-0.6194&0.4891&-2.0675\end{bsmallmatrix},&B &=\begin{bsmallmatrix}     0     &    0\\
         0    0.5963\\
   -0.1556  & -0.1135\\
    0.1291  &  0.8070\\
         0  & -0.0898\\
   -0.0301  & -0.0063\end{bsmallmatrix},\nonumber\\
   C &=\begin{bsmallmatrix}-0.0919&-0.9212&-0.9270&-0.9612&1.7848&-0.2002\end{bsmallmatrix}.\nonumber
\end{align} The initial guess of the ROM used in FLHMOR and TLHMOR is given by
\begin{align}
\hat{A}^{(0)}&=\begin{bmatrix}-0.5763 & -0.5492\\-0.6734&-2.6844\end{bmatrix}, &\hat{B}^{(0)}&=\begin{bmatrix}-0.2749&-1.0346\\0.1071& 1.6785\end{bmatrix},\nonumber\\
\hat{C}^{(0)}&=\begin{bmatrix}-0.0260&-0.8520\end{bmatrix}.\nonumber
\end{align}
The desired frequency interval in FLHMOR and FLITIA is set to $[-0.5,0.5]$ rad/sec. The ROM constructed by FLHMOR is given by
\begin{align}
\hat{A} &=\begin{bmatrix}-0.5772&-0.7972\\-0.4698&-2.6876\end{bmatrix},&
\hat{B} &=\begin{bmatrix}0.1461&0.5396\\-0.0254&-0.5516\end{bmatrix},\nonumber\\
\hat{C} &=\begin{bmatrix}0.1027&2.6474\end{bmatrix}.\nonumber
\end{align} It can be verified that
\begin{align}
C\bar{P}_\omega&=\hat{C}\hat{P}_\omega=\begin{bmatrix}-0.2169&0.0679\end{bmatrix},&\bar{Q}_\omega^TB&=\hat{Q}_\omega\hat{B}=\begin{bmatrix}0.0143&0.1051\\-0.0221&-0.1778\end{bmatrix}.\nonumber
\end{align}However, $||F_\omega[A]-\hat{V}F_\omega[\hat{A}]\hat{W}^T||_2=0.1502$, which reveals that the optimality condition (\ref{a1}) is not exactly satisfied. The ROM constructed by FLITIA is given by
\begin{align}
\hat{A} &=\begin{bmatrix}-0.6921  & -1.5934\\
   -0.3789 &  -2.5727\end{bmatrix},&
\hat{B} &=\begin{bmatrix}0.0931  &  0.7682\\
   -0.0216  &  0.3427\end{bmatrix},\nonumber\\
\hat{C} &=\begin{bmatrix}-0.9933  & -1.8726\end{bmatrix}.\nonumber
\end{align} Interestingly, FLITIA also satisfies the optimality conditions (\ref{a2}) and (\ref{a3}) exactly in this case, and it can be verified that
\begin{align}
C\bar{P}_\omega&=\hat{C}\hat{P}_\omega=\begin{bmatrix}-0.1662& 0.0041\end{bmatrix},&\bar{Q}_\omega^TB&=\hat{Q}_\omega\hat{B}=\begin{bmatrix}0.0275& 0.2093\\
   -0.0037  & -0.0175\end{bmatrix}.\nonumber
\end{align}However, $||F_\omega[A]-\hat{V}F_\omega[\hat{A}]\hat{W}^T||_2=0.1502$, which reveals that the optimality condition (\ref{a1}) is not exactly satisfied. A closer look at the ROMs constructed by FLHMOR and FLITIA reveals that these are actually different realizations of the same transfer function.

The desired time interval in TLHMOR and TLITIA is set to $[0,0.1]$ sec.
The ROM constructed by TLHMOR is given by
\begin{align}
\hat{A}&=\begin{bmatrix}-2.5168&0.4390\\1.2046&-2.5553\end{bmatrix},&
\hat{B}&=\begin{bmatrix}0.1618&0.2539\\0.0342&0.6215\end{bmatrix},\nonumber\\
\hat{C}&=\begin{bmatrix}0.68987-2.5008\end{bmatrix}.\nonumber
\end{align}
It can be verified that
\begin{align}
C\bar{P}_\tau&=\hat{C}\hat{P}_\tau=\begin{bmatrix}-0.0294&-0.0700\end{bmatrix},&\bar{Q}_\tau^TB&=\hat{Q}_\tau\hat{B}=\begin{bmatrix}0.0003&-0.0608\\
-0.0009&0.2740\end{bmatrix}.\nonumber
\end{align} However, $||e^{At_2}-\hat{V}e^{\hat{A}t_2}\hat{W}^T||_2=1.4127$, which reveals that the optimality condition (\ref{b1}) is not exactly satisfied. The ROM constructed by TLITIA is given by
\begin{align}
\hat{A} &=\begin{bmatrix}-2.4649 &  -1.2215\\
   -0.4291 &  -2.6072\end{bmatrix},&
\hat{B} &=\begin{bmatrix}0.0020 &  -0.5830\\
    0.1609  &  0.1621\end{bmatrix},\nonumber\\
\hat{C} &=\begin{bmatrix}2.4022  &  0.1330
\end{bmatrix}.\nonumber
\end{align} Interestingly, TLITIA also satisfies the optimality conditions (\ref{b2}) and (\ref{b3}) exactly in this case, and it can be verified that
\begin{align}
C\bar{P}_\tau&=\hat{C}\hat{P}_\tau=\begin{bmatrix}0.0655&-0.0190\end{bmatrix},&\bar{Q}_\tau^TB&=\hat{Q}_\tau\hat{B}=\begin{bmatrix}0.0008&-0.2655\\
    0.0001 &   0.0003\end{bmatrix}.\nonumber
\end{align}However, $||e^{At_2}-\hat{V}e^{\hat{A}t_2}\hat{W}^T||_2=1.4127$, which reveals that the optimality condition (\ref{b1}) is not exactly satisfied. A closer look at the ROMs constructed by TLHMOR and TLITIA reveals that these are actually different realizations of the same transfer function.
\subsection{Clamped Beam} Let us consider the $348^{th}$ order clamped beam model as the test system from the benchmark collection of test systems for MOR, cf. \citep{chahlaoui2005benchmark}. For the frequency-limited case, the desired frequency interval is chosen as $[-6,-4]\cup[4,6]$ rad/sec. The clamped beam model is reduced by using FLBT, FLITIA, and FLHMOR, and their accuracy is compared in Table \ref{tab2}. It can be noticed that FLITIA and FLHMOR offer less $||E(s)||_{\mathcal{H}_{2,\omega}}$ than FLBT.
\begin{table}[!h]
\caption{$||E(s)||_{\mathcal{H}_{2,\omega}}$}\label{tab2}
\centering
\begin{tabular}{|c|c|c|c|}
\hline
Order & FLBT & FLITIA & FLHMOR \\ \hline
10    &  $0.0118$    &  $0.0099$      &  $0.0099$      \\ \hline
11    & $0.0203$     &  $0.0099$      &   $0.0099$     \\ \hline
12    &  $4.2345\times 10^{-4}$    & $4.1256\times 10^{-4}$       &   $4.1952\times 10^{-4}$     \\ \hline
13    &  $2.4317\times 10^{-4}$    &  $2.2695\times 10^{-4}$      &   $2.2364\times 10^{-4}$     \\ \hline
14    &  $2.4189\times 10^{-4}$    &  $2.0278\times 10^{-4}$      &  $2.0642\times 10^{-4}$      \\ \hline
15    &  $2.4109\times 10^{-4}$    &   $1.9690\times 10^{-4}$     & $2.0642\times 10^{-4}$       \\ \hline
\end{tabular}
\end{table}The frequency-domain error for the ROMs of order $14$ within the desired frequency interval is compared in Figure \ref{fig1}.
\begin{figure}[!h]
  \centering
  \includegraphics[width=10cm]{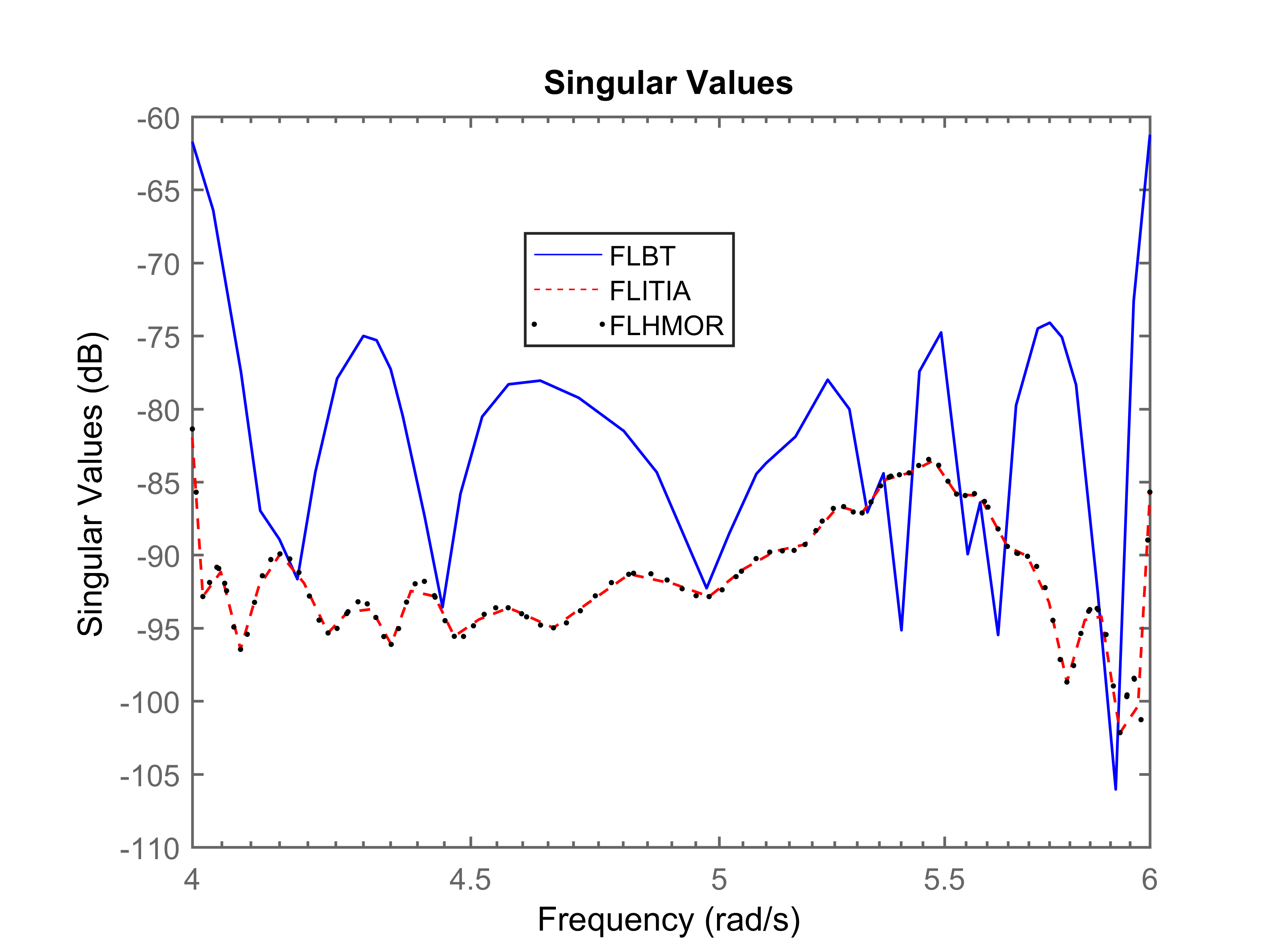}
  \caption{Singular values of $E(s)$ within $[4,6]$ rad/sec}\label{fig1}
\end{figure} It can be seen that FLITIA and FLHMOR offer better accuracy than FLBT.

For the time-limited case, the desired time interval is chosen as $[0,1]$ sec. The clamped beam model is reduced by using TLBT, TLITIA, and TLHMOR, and their accuracy is compared in Table \ref{tab3}. It can be noticed that TLITIA and TLHMOR mostly offer less $||E(s)||_{\mathcal{H}_{2,\tau}}$ than TLBT.
\begin{table}[!h]
\caption{$||E(s)||_{\mathcal{H}_{2,\tau}}$}\label{tab3}
\centering
\begin{tabular}{|c|c|c|c|}
\hline
Order & TLBT & TLITIA & TLHMOR \\ \hline
10    &  0.1637    &    0.1016    &   0.1016     \\ \hline
11    &  0.1200    &    0.1051    &  0.1051      \\ \hline
12    &  0.0872    &     0.0903   &     0.0903   \\ \hline
13    &    0.0662  &     0.0390   &    0.0390    \\ \hline
14    &    0.0594  &    0.0586    &    0.0586    \\ \hline
15    &   0.0018   &    0.0017    &   0.0017     \\ \hline
\end{tabular}
\end{table}
The time-domain error for the ROMs of order $12$ within the desired time interval is compared in Figure \ref{fig2}.
\begin{figure}[!h]
  \centering
  \includegraphics[width=10cm]{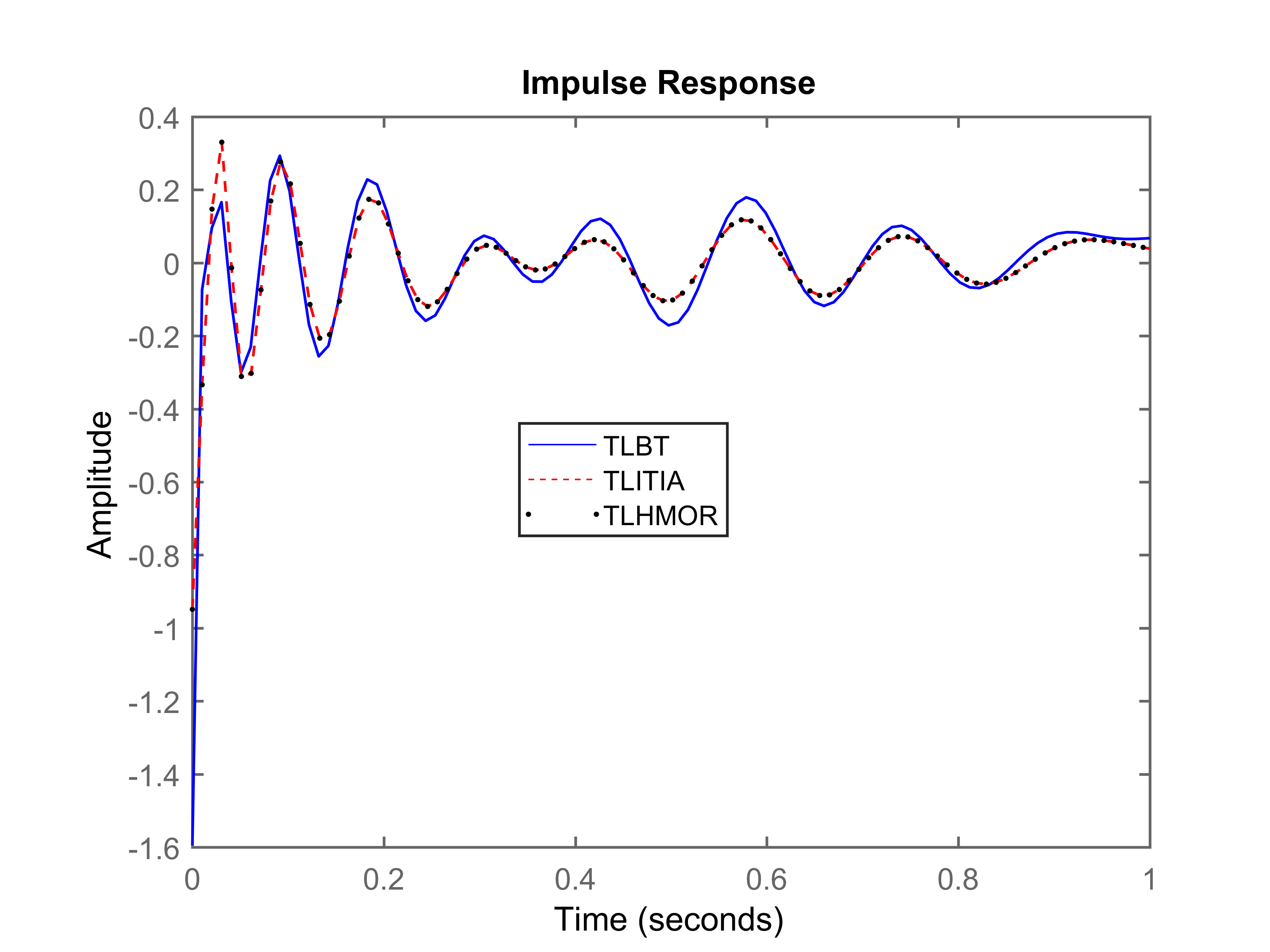}
  \caption{Impulse response of $E(s)$ within $[0,1]$ sec}\label{fig2}
\end{figure}It can be seen that TLITIA and TLHMOR compare well in accuracy with TLBT.
\subsection{Artificial Dynamic System} Let us consider the $1006^{th}$ order artificial dynamic system model as the test system from the benchmark collection of test systems for MOR, cf. \citep{chahlaoui2005benchmark}. For the frequency-limited case, the desired frequency interval is chosen as $[-15,-11]\cup[11,15]$ rad/sec. The artificial dynamic system model is reduced by using FLBT, FLITIA, and FLHMOR, and their accuracy is compared in Table \ref{tab4}. It can be noticed that FLITIA and FLHMOR mostly offer less $||E(s)||_{\mathcal{H}_{2,\omega}}$ than FLBT.
\begin{table}[!h]
\caption{$||E(s)||_{\mathcal{H}_{2,\omega}}$}\label{tab4}
\centering
\begin{tabular}{|c|c|c|c|}
\hline
Order & FLBT & FLITIA & FLHMOR \\ \hline
10    & $2.3514\times 10^{-5}$     &  $3.9357\times 10^{-6}$      & $2.1685\times 10^{-5}$       \\ \hline
11    & $1.5678\times 10^{-5}$     & $3.5242\times 10^{-6}$       & $1.5956\times 10^{-5}$       \\ \hline
12    & $5.7383\times 10^{-5}$     &  $5.8867\times 10^{-6}$      &  $7.2903\times 10^{-6}$      \\ \hline
13    &  $4.2452\times 10^{-5}$    &  $2.4202\times10^{-5}$      &   $6.6805\times 10^{-6}$     \\ \hline
14    &   $3.8084\times 10^{-5}$   &   $9.8140\times 10^{-6}$     &  $7.5714\times 10^{-6}$      \\ \hline
15    &    $5.8612\times 10^{-5}$  &   $1.7097\times 10^{-5}$     &  $1.6830\times 10^{-5}$      \\ \hline
\end{tabular}
\end{table}The frequency-domain error for the ROMs of order $10 $ within the desired frequency interval is compared in Figure \ref{fig3}.
\begin{figure}[!h]
  \centering
  \includegraphics[width=10cm]{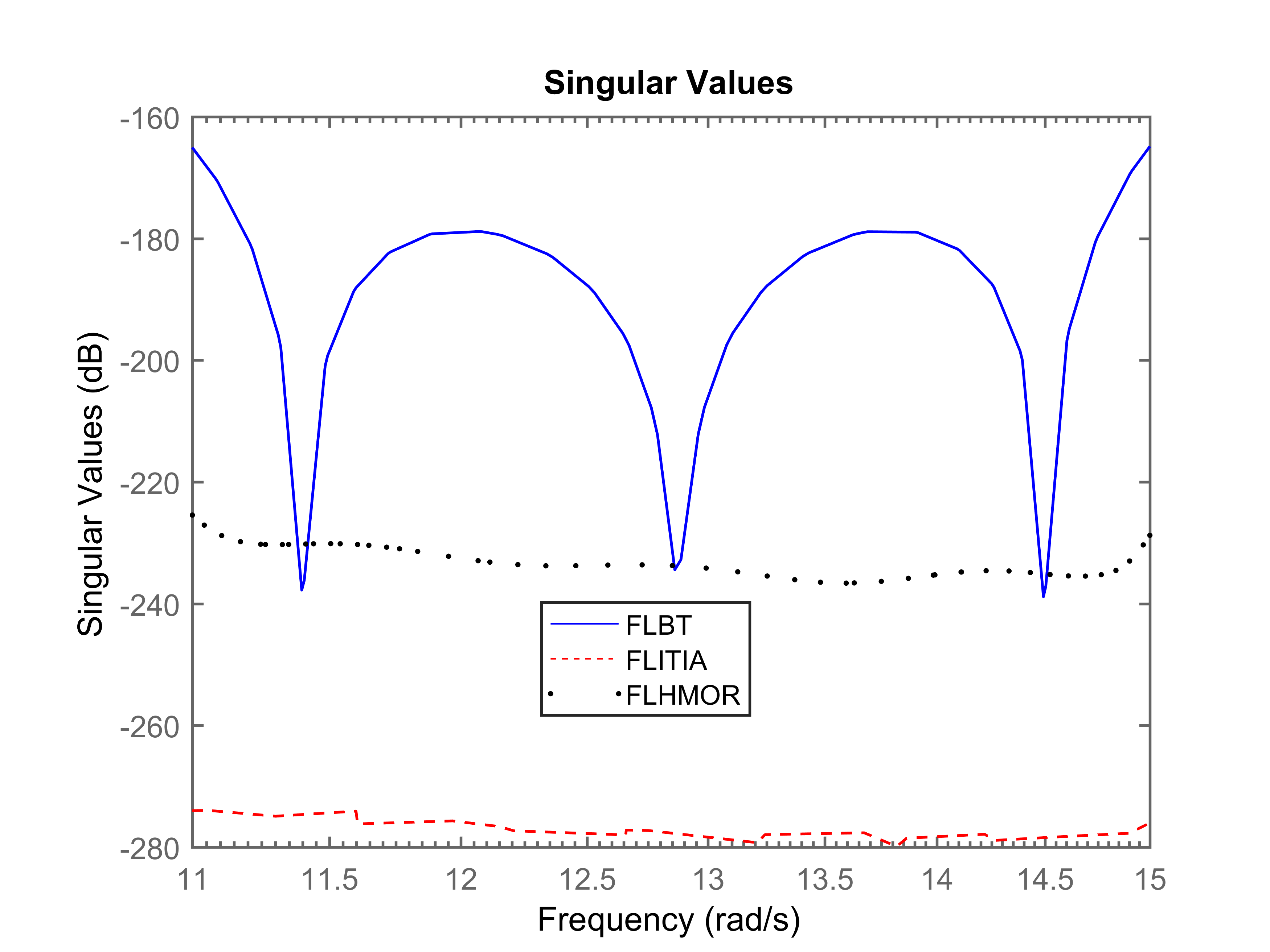}
  \caption{Singular values of $E(s)$ within $[11,15]$ rad/sec}\label{fig3}
\end{figure}It can be seen that FLITIA and FLHMOR offer better accuracy than FLBT.

For the time-limited case, the desired time interval is chosen as $[0,2]$ sec. The artificial dynamic system model is reduced by using TLBT, TLITIA, and TLHMOR, and their accuracy is compared in Table \ref{tab5}. It can be noticed that TLITIA and TLHMOR offer less $||E(s)||_{\mathcal{H}_{2,\tau}}$ than TLBT.
\begin{table}[!h]
\caption{$||E(s)||_{\mathcal{H}_{2,\tau}}$}\label{tab5}
\centering
\begin{tabular}{|c|c|c|c|}
\hline
Order & TLBT & TLITIA & TLHMOR \\ \hline
10    & $0.5170$     &  $0.3428$      &   $0.3428$     \\ \hline
11    &  $0.1562$    &   $0.1030$     &  $0.1030$      \\ \hline
12    &   $0.0460$   &   $0.0312$     &  $0.0312$      \\ \hline
13    &   $0.0131$   &  $0.0093$      &   $0.0093$     \\ \hline
14    &   $0.0036$   &    $0.0026$    &  $0.0026$      \\ \hline
15    &   $9.9176\times 10^{-4}$   &   $7.1795\times 10^{-4}$     &   $7.1861\times 10^{-4}$     \\ \hline
\end{tabular}
\end{table}
The time-domain error for the ROMs of order $15$ within the desired time interval is compared in Figure \ref{fig4}.
\begin{figure}[!h]
  \centering
  \includegraphics[width=10cm]{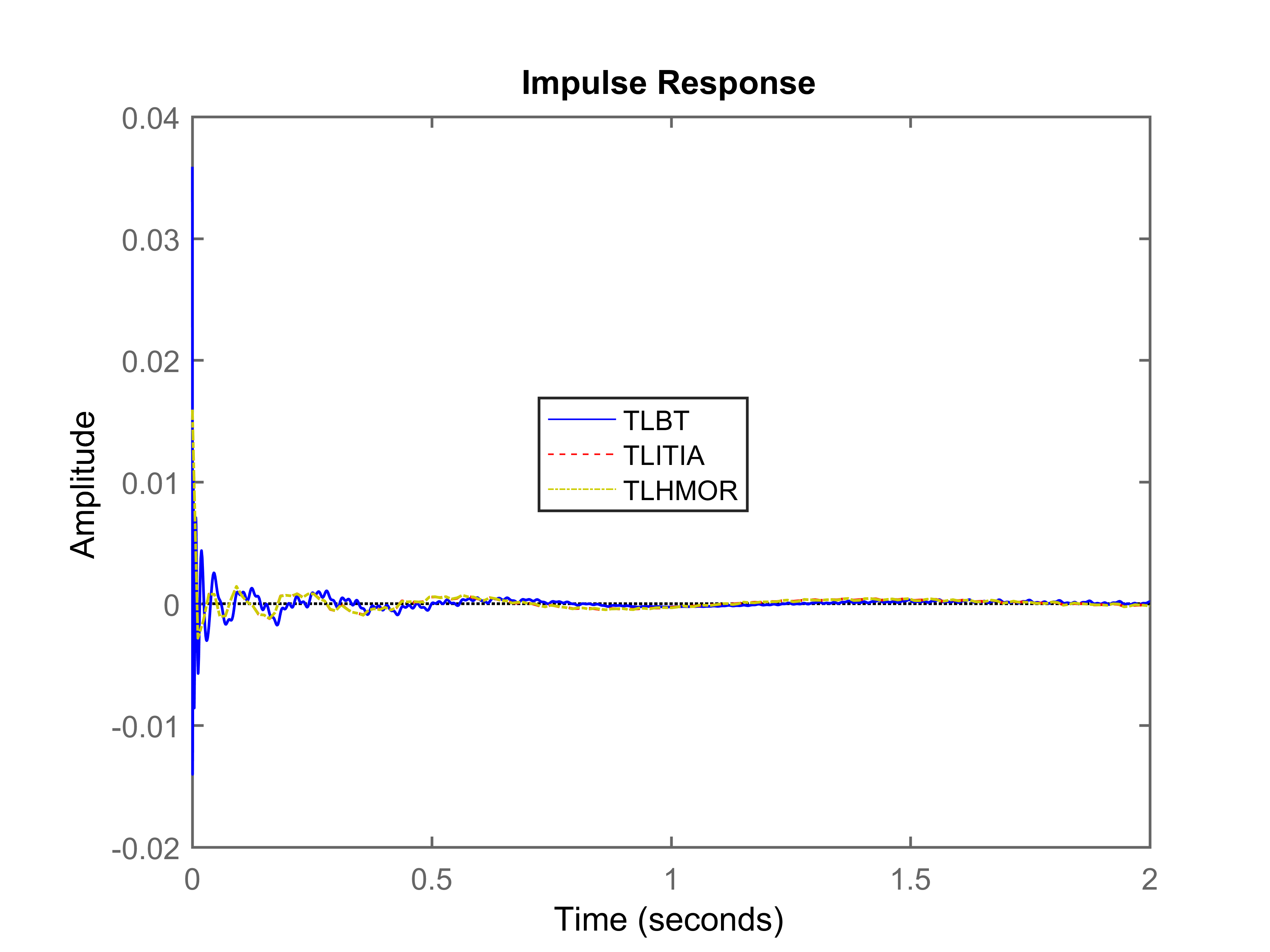}
  \caption{Impulse response of $E(s)$ within $[0,2]$ sec}\label{fig4}
\end{figure}
\subsection{International Space Station} Let us consider the $348^{th}$ order international space station model as the test system from the benchmark collection of test systems for MOR, cf. \citep{chahlaoui2005benchmark}. This model has $3$ inputs and $3$ outputs. For the frequency-limited case, the desired frequency interval is chosen as $[-12,-9]\cup[9,12]$ rad/sec. The international space station model is reduced by using FLBT, FLITIA, and FLHMOR, and their accuracy is compared in Table \ref{tab6}. It can be noticed that FLITIA and FLHMOR offer less $||E(s)||_{\mathcal{H}_{2,\omega}}$ than FLBT.
\begin{table}[!h]
\caption{$||E(s)||_{\mathcal{H}_{2,\omega}}$}\label{tab6}
\centering
\begin{tabular}{|c|c|c|c|}
\hline
Order & FLBT & FLITIA & FLHMOR \\ \hline
15    & $3.4372\times 10^{-5}$     &  $2.4039\times 10^{-5}$      &  $2.4039\times 10^{-5}$      \\ \hline
16    & $2.7377\times 10^{-5}$      &  $1.1905\times 10^{-5}$       &   $1.1905\times 10^{-5}$      \\ \hline
17    &  $5.1045\times 10^{-5}$     &  $1.0804\times 10^{-5}$       &   $1.0804\times 10^{-5}$      \\ \hline
18    & $5.1055\times 10^{-5}$      &   $3.6488\times 10^{-6}$      & $3.6488\times 10^{-6}$         \\ \hline
19    &  $5.0940\times 10^{-5}$     &  $3.4274\times 10^{-6}$       &  $3.4274\times 10^{-6}$       \\ \hline
20    &   $2.8898\times 10^{-5}$    &  $2.9185\times 10^{-6}$       &  $2.9211\times 10^{-6}$       \\ \hline
\end{tabular}
\end{table}There are $9$ frequency-domain plots of the $3\times 3$ error transfer function that are not shown for brevity.
For the time-limited case, the desired time interval is chosen as $[0,2.5]$ sec. The international space model is reduced by using TLBT, TLITIA, and TLHMOR, and their accuracy is compared in Table \ref{tab7}.
\begin{table}[!h]
\caption{$||E(s)||_{\mathcal{H}_{2,\tau}}$}\label{tab7}
\centering
\begin{tabular}{|c|c|c|c|}
\hline
Order & TLBT & TLITIA & TLHMOR \\ \hline
15    & $9.5009\times 10^{-4}$     &  $5.8676\times 10^{-4}$      & $5.8676\times 10^{-4}$       \\ \hline
16    & $6.3547\times 10^{-4}$     & $5.1907\times 10^{-4}$       &  $5.1907\times 10^{-4}$      \\ \hline
17    & $3.8048\times 10^{-4}$     & $3.6969\times 10^{-4}$       & $3.6969\times 10^{-4}$       \\ \hline
18    & $5.6965\times 10^{-4}$     & $4.2228\times 10^{-4}$       & $4.2228\times 10^{-4}$       \\ \hline
19    & $2.5937\times 10^{-4}$     & $1.9977\times 10^{-4}$       & $1.9977\times 10^{-4}$       \\ \hline
20    & $1.8241\times 10^{-4}$     & $1.7952\times 10^{-4}$       & $1.7952\times 10^{-4}$       \\ \hline
\end{tabular}
\end{table}It can be noticed that TLITIA and TLHMOR offer less $||E(s)||_{\mathcal{H}_{2,\tau}}$ than TLBT.
There are $9$ time-domain plots of the $3\times 3$ error transfer function that are not shown for brevity.
\section{Conclusion}
The $\mathcal{H}_{2,\omega}$- and $\mathcal{H}_{2,\tau}$-optimal MOR problems within the oblique projection framework are addressed. It is shown that two out of three optimality conditions can be exactly satisfied within the oblique projection framework, whereas the third optimality condition can be nearly satisfied. The equivalence between gramian-based optimality conditions and the tangential interpolation conditions is also established. Two iterative tangential interpolation algorithms are proposed that nearly satisfy the three optimality conditions upon convergence. The deviations in satisfaction of the optimality conditions decay quickly as the order of the reduced model increases. Further, two stationary point iteration algorithms are proposed that exactly satisfy two optimality conditions upon convergence, while the third optimality condition is nearly satisfied. The deviation in satisfaction of the third optimality condition decays quickly as the order of the reduced model increases. The numerical results confirm that the proposed algorithms construct near-optimal ROMs, which exhibit high fidelity within the desired frequency and time intervals.

\end{document}